\documentclass[11pt,a4paper]{article}
\usepackage[margin=2cm]{geometry}
\usepackage[T1]{fontenc}
\usepackage[utf8]{inputenc}
\usepackage[noadjust]{cite}
\usepackage{lmodern}
\usepackage{amsthm}
\usepackage{thmtools}
\makeatletter
\@ifundefined{newcounteralias}{}{%
  \renewcommand\thmt@autorefsetup{%
    \@xa\def\csname\thmt@envname autorefname\@xa\endcsname
      \@xa{\thmt@thmname}%
  }%
}
\makeatother

\usepackage{thm-autoref}
\usepackage[shortlabels]{enumitem}
\usepackage[center]{caption}
\usepackage{latexsym}
\usepackage{amsfonts}
\usepackage{graphicx}
\usepackage{mathrsfs}
\usepackage[normalem]{ulem}
\usepackage{amssymb, mathdots, mathtools, xkeyval, tikz, tkz-euclide, adjustbox, float, array, subcaption, listings}
\usepackage{comment}
\usepackage{xspace}
\usepackage[colorlinks=true,
linkcolor =blue,
citecolor=blue,
urlcolor=blue]{hyperref}
\usepackage[capitalise]{cleveref}

\theoremstyle{plain}

\newtheorem{theorem}{Theorem}[section]
\newtheorem{lemma}[theorem]{Lemma}

\newtheorem{corollary}[theorem]{Corollary}

\newtheorem{question}[theorem]{Question}

\newtheorem{claim}{Claim}[theorem]
\newenvironment{claimproof}{\noindent\textit{Proof of Claim \theclaim:}}{\leavevmode\unskip\penalty9999\hbox{}\nobreak\hfill\quad\hbox{$\lrcorner$}\par} 

\crefname{claim}{Claim}{Claims}
\crefname{question}{Question}{Questions}

\newcommand{\lcol}[1]{\textsc{List #1-Coloring}\xspace}
\newcommand{\MWIS}{\textsc{MWIS}\xspace}
\newcommand{\NP}{\textsf{NP}\xspace}
\newcommand{\Oh}{\mathcal{O}}

\newcommand{\cc}{\textsf{cc}}

\newcommand{\w}{\textsf{w}}
\newcommand{\m}{\textsf{m}}
\newcommand{\dist}{\mathsf{dist}}
\renewcommand{\phi}{\varphi}

\newcommand{\cC}{\mathcal{C}}

\newcommand{\cG}{\mathcal{G}}

\newcommand{\cI}{\mathcal{I}}
\newcommand{\cJ}{\mathcal{J}}

\newcommand{\cS}{\mathcal{S}}

\usepackage{todonotes}

\title{List 3-Coloring in tame classes of graphs in quasipolynomial time}

\author{Paweł Rzążewski\thanks{Warsaw University of Technology. Supported by the National Science Centre grant 2024/54/E/ST6/00094.}}
\date{}

\hypersetup{hypertexnames=false}
\begin{document}
	\begin{titlepage}
	\date{}
	\maketitle

\begin{abstract}
    A class of graphs $\mathcal{G}$ is \emph{tame} if there is a constant $c$ such that every $n$-vertex graph in $\mathcal{G}$ has at most $n^c$ minimal separators.
    Many important graph classes are tame, including chordal graphs, permutation graphs, and circle graphs.
    Several fundamental problems, most notably \textsc{Max Weight Independent Set}, can be solved in polynomial time on every tame class, which explains in a unified way their tractability on all these classes.

    We show that \textsc{List 3-Coloring} can be solved in quasipolynomial time on every tame class of graphs.
\end{abstract}

	\thispagestyle{empty}

  \bigskip
  \bigskip
  \bigskip
\tableofcontents
\end{titlepage}
\thispagestyle{empty}
\newpage\setcounter{page}{1}

\section{Introduction}

A central theme in algorithmic graph theory is to understand how the complexity of computationally hard problems changes when the input graph is restricted to a specific graph class.
For many classical problems, this has led to a large number of results, each exploiting the structure of a particular class.
It is therefore very desirable to have unifying frameworks that explain the tractability of a problem in many classes at once.

One such framework is given by the notion of tame classes.
Following the nomenclature of~\cite{MilanicP21,GartlandL23I,GartlandL23II}, a class $\cG$ of graphs is \emph{tame} (resp., \emph{quasitame}) if there is a constant $c$ (resp., two constants $a$ and $c$) such that every $n$-vertex graph in $\cG$ has at most $n^c$ (resp., $2^{c \log^a n}$) minimal separators.
Understanding which graph classes are (quasi)tame has recently attracted significant attention~\cite{MilanicP21,AbrishamiCDTTV22,GartlandL23I,GartlandL23II,GajarskyJLNPRS22}.
Many important graph classes are tame, including chordal graphs~\cite{Rose70}, weakly chordal graphs~\cite{BouchitteT01}, circular-arc graphs~\cite{KloksKW98}, and circle graphs~\cite{KloksKW98}, which in turn include permutation graphs.

Bouchitt\'e and Todinca~\cite{BouchitteT01,BouchitteT02} introduced a general framework based on so-called potential maximal cliques.
It was used to solve \textsc{Max Weight Independent Set} (\MWIS)~\cite{FominV10}, and later extended by Fomin, Todinca, and Villanger~\cite{FominTV15} to the $(\mathsf{tw} \leq r, \psi)$-\MWIS problem.
Here, $r$ is a fixed integer and $\psi$ is a fixed formula of counting monadic second-order logic (CMSO$_2$), and, given a graph $G$ with weights on vertices, we look for a maximum-weight set $X \subseteq V(G)$ such that $G[X]$ has treewidth at most $r$ and satisfies $\psi$.\footnote{The actual formalism is slightly more general, see~\cite{FominTV15}.}
This captures many problems, for example \MWIS (for $r = 0$) and \textsc{Feedback Vertex Set} (for $r = 1$), as the complement of a minimum-weight feedback vertex set is a maximum-weight induced forest.

\begin{theorem}[\cite{BouchitteT01,BouchitteT02,FominTV15}]\label{thm:pmc}
    For every fixed integer $r$ and CMSO$_2$ formula $\psi$, the $(\mathsf{tw} \leq r, \psi)$-\MWIS problem can be solved in time polynomial in the number of vertices and the number of minimal separators of the input graph.
\end{theorem}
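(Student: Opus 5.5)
This is a known result, so the plan follows the potential maximal clique (PMC) framework of Bouchitté and Todinca, with the Fomin--Todinca--Villanger extension for the treewidth bound and the formula $\psi$. Two facts carry the argument. First, by Bouchitté--Todinca, the number of PMCs of $G$ is bounded by $\Oh(n \cdot |\Delta_G|^2)$, where $\Delta_G$ is the set of minimal separators; moreover, all PMCs can be listed in time polynomial in $n$ and $|\Delta_G|$. Second, every minimal triangulation $H$ of $G$ has maximal cliques that are PMCs of $G$ and minimal separators that belong to $\Delta_G$.

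\textbf{Step 1: the structural lemma.} We use the result of~\cite{FominTV15}. If $X$ is a set such that $G[X]$ has treewidth at most $r$, then there is a minimal triangulation $H$ of $G$ with $|\Omega \cap X| \le r+1$ for every maximal clique $\Omega$ of $H$. The proof takes a tree decomposition of $G[X]$ of width at most $r$ and adds all of $V(G)\setminus X$ to every bag. This gives a triangulation $H_0$ of $G$ whose cliques meet $X$ in at most $r+1$ vertices. Any minimal triangulation $H \subseteq H_0$ inherits this property.

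\textbf{Step 2: the dynamic program.} The DP runs over \emph{blocks} $(S,C)$, where $S\in\Delta_G$ and $C$ is a full component of $G-S$, and over triples $(S,C,\Omega)$ with $\Omega$ a PMC and $S\subseteq\Omega\subseteq S\cup C$. In the plain \MWIS case with treewidth bound, the state also records a set $Y\subseteq S$ with $|Y|\le r+1$, the intended trace of $X$ on $S$. For a triple, the value is assembled from the blocks $(S_i,C_i)$ where $C_i$ ranges over the components of $G[C]\setminus\Omega$; each $S_i=N(C_i)\subseteq\Omega$. We choose $Z\subseteq\Omega$ of size at most $r+1$ that agrees with $Y$ on $S$, and combine the children's values with traces $Z\cap S_i$. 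The number of guesses is $n^{\Oh(r)}$, which is polynomial since $r$ is fixed.

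\textbf{Step 3: handling $\psi$.} To handle $\psi$, each state additionally stores a CMSO$_2$ type. This is the finite-index Myhill--Nerode/Courcelle equivalence class of the boundaried graph $G[X\cap(S\cup C)]$ with boundary $X\cap S$, of size at most $r+1$. Composition of types along the gluing at $\Omega$ is computable, and there are finitely many types because the boundary size and the quantifier rank of $\psi$ are bounded. Correctness is proved in two directions. In one direction, every DP value is realized by an actual solution, by induction. In the other, by Step 1 the optimal $X$ is captured by the PMCs of some minimal triangulation $H$, whose recursive decomposition into blocks is exactly the one the DP explores. Since all objects are polynomial in $n+|\Delta_G|$, so is the running time.

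\textbf{Main obstacle.} The hardest part is Step 3 together with the correctness of the gluing. The type of the solution inside a block must depend only on the boundary trace. The children blocks may overlap on $\Omega$, so the types must be combined without double counting. The combination must also respect the treewidth bound, which requires the tree decomposition to be rebuilt from the PMCs intersected with $X$. I would handle this by applying Courcelle-style composition to the tree decomposition $\{\Omega\cap X\}$ induced by $H$.
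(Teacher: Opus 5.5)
The paper does not prove this statement itself but imports it from Bouchitt\'e--Todinca and Fomin--Todinca--Villanger, and your sketch follows exactly that route. It has the same three ingredients: polynomial enumeration of potential maximal cliques from the minimal separators; the lemma that some minimal triangulation has every maximal clique meeting $X$ in at most $r+1$ vertices, proved as you do by adding $V(G)\setminus X$ to every bag of a width-$r$ decomposition of $G[X]$; and a dynamic program over blocks and triples $(S,C,\Omega)$ indexed by traces of size at most $r+1$ and by CMSO$_2$ types of boundaried graphs. It is correct at the level of detail given, and what you leave implicit (folding the children's types one at a time at each $\Omega$, and choosing a root PMC) is routine.
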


Consequently, $(\mathsf{tw} \leq r, \psi)$-\MWIS can be solved in polynomial time on every tame class, and in quasipolynomial time on every quasitame class.
Are there any other natural problems that can be solved in polynomial time on every tame class?

A hint comes from the study of $P_t$-free graphs, i.e., graphs that do not contain the $t$-vertex path $P_t$ as an induced subgraph.
Already $P_5$-free graphs are not tame: the graph consisting of two $n$-vertex cliques with a perfect matching between them is $P_5$-free and has $2^n-2$ minimal separators~\cite{PilipczukR25}.
Nevertheless, \MWIS and, more generally, $(\mathsf{tw} \leq r, \psi)$-\MWIS can be solved in polynomial time in $P_5$-free graphs~\cite{LokshtanovVV14,AbrishamiCPRS21} and in $P_6$-free graphs~\cite{GrzesikKPP22,ChudnovskyMPPR26}, using relaxations of the framework of potential maximal cliques.
Moreover, for every fixed $t$, these problems can be solved in quasipolynomial time in $P_t$-free graphs~\cite{GartlandL20,PilipczukPR21,GartlandLPPR21}.
While this is weaker than polynomial-time solvability, it is a strong indication that these problems are not \NP-hard in $P_t$-free graphs: otherwise, every problem in \NP could be solved in quasipolynomial time.

Another problem whose complexity in $P_t$-free graphs is similar to that of \MWIS is \lcol{3}.
Here we are given a graph $G$ and a \emph{list assignment} $L$ that assigns to every vertex $v \in V(G)$ a list $L(v) \subseteq \{1,2,3\}$ of colors, and we ask whether there is a proper coloring of $G$ that \emph{respects} $L$, i.e., every vertex receives a color from its list.
\lcol{3} can be solved in polynomial time in $P_t$-free graphs for $t \leq 7$~\cite{BonomoCMSSZ18}, and in quasipolynomial time for every fixed $t$~\cite{PilipczukPR21}.

Turning back to tame classes, \lcol{3} can be solved in polynomial time in some of them, but using class-specific arguments~\cite{JansenS97,WoegingerS01,RanderathS04,EnrightS11,BroersmaFGP13}.
Can we obtain such a result in a unified way, as for \MWIS in \cref{thm:pmc}?

\begin{restatable}{question}{qpoly}\label{q:poly}
    Can \lcol{3} be solved in polynomial time on every tame class of graphs?
\end{restatable}

We remark that analogous questions for some closely related problems have negative answers, unless $\textsf{P} = \NP$: \textsc{4-Coloring} is \NP-hard in circle graphs~\cite{Unger88}, \lcol{4} is \NP-hard in chordal bipartite graphs~\cite{HuangJP15}, which are tame as they are weakly chordal, and \textsc{Odd Cycle Transversal} is \NP-hard in $P_6$-free graphs of clique number at most $3$~\cite{DabrowskiFJPPR20,ChudnovskyKPRS21}, which form a tame class~\cite{PilipczukR25}.

Let us mention an important special case of \cref{q:poly}, namely circle graphs, i.e., intersection graphs of chords of a circle.
Unger~\cite{Unger92} claimed a polynomial-time algorithm for \textsc{3-Coloring} circle graphs, but only an extended abstract was published, and the proof was believed to be flawed.
Recently, Bachmann, Rutter, and Stumpf~\cite{BachmannRS24} exposed mistakes in the algorithm and argued that the problem should be considered open.
Very recently, Ajaykrishnan~E~S, Ganian, Lokshtanov, and Surianarayanan~\cite{ESGLS26} gave an algorithm for \lcol{3} in circle graphs with running time $n^{\Oh(\log n)}$.

Another special case of \cref{q:poly} was explicitly asked by Huang, Johnson, and Paulusma~\cite{HuangJP15}: what is the complexity of \lcol{3} in chordal bipartite graphs?

In this paper we give a partial answer to \cref{q:poly}.

\begin{theorem}\label{thm:main}
    \lcol{3} can be solved in quasipolynomial time on every quasitame class of graphs.
    For tame classes, the running time is $n^{\Oh(\log^5 n)}$, where $n$ is the number of vertices in the input graph.
\end{theorem}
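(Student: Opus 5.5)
The plan is a recursive branching algorithm in the spirit of the quasipolynomial algorithms for $P_t$-free graphs~\cite{PilipczukPR21,GartlandL20}. The quantity that drives the recursion is the number of minimal separators, not the number of vertices. First, the usual reductions. Any vertex with a list of size $1$ gets its color fixed, and that color is removed from the lists of its neighbors. Any vertex with a list of size at most $2$ leaves a residual instance that is essentially 2-SAT. In particular, if every list has size at most $2$, the instance is solved in polynomial time. So the complexity lives in the vertices whose list is all of $\{1,2,3\}$, and we measure progress by these vertices together with the minimal separators of the current graph. Fixing a color of a vertex $v$ shrinks the lists of all of $N(v)$. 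Hence branching on a vertex of \emph{large degree} among the 3-list vertices is cheap. Call such a vertex heavy, say with at least $n/\log^{\Oh(1)} n$ such neighbors. In each branch, either its color appears among the neighbors' lists or not. With the standard two-way branching ($v$ gets color $c$ vs.\ not), one branch removes many 3-lists. The other branch only shrinks $L(v)$, which also counts as progress. This yields recursion depth $\log^{\Oh(1)} n$ on this measure.

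The second ingredient deals with the case where no vertex is heavy. Here I would use the structure of minimal separators. The key step is a balanced-separator lemma for quasitame graphs. If every vertex has few neighbors among the "active" (3-list) vertices, then there exists a minimal separator $S$ such that every component of $G-S$ contains at most a constant fraction of the active vertices. Moreover, $S$ should be coverable by the neighborhoods of $\log^{\Oh(1)} n$ vertices, or at least its active part should be small enough to branch on. I would try to get this via a Gyárfás-path/Gartland–Lokshtanov style argument. Consider a minimal separator $S$ chosen to be balanced with respect to a weight on active vertices, from the polynomially many (quasi-polynomially many) candidates. Use the full components of $S$ to argue that the separator is "dominated" by few vertices. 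Then branch on the colors of those dominating vertices. Every vertex of $S$ then gets its list reduced to size at most $2$, and the instance splits into independent subinstances on the components, glued by the 2-SAT constraints on $S$.

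To make the counting work I would follow the potential argument from~\cite{GartlandL20,PilipczukPR21}. Instead of a single balanced separator, I would argue that for a random or suitably chosen vertex, branching on it "hits" a $1/\mathrm{polylog}$ fraction of the minimal separators, or of balanced separators, that remain relevant. In each branch, the number of relevant minimal separators, which is at most $n^c$ in the tame case, then drops by a factor $(1-1/\log^{\Oh(1)} n)$. This gives $\log^{\Oh(1)} n \cdot \log(n^c)$ levels of branching before all balanced separators have become "cheap" and the instance decomposes. Multiplying the depth factors ($\log n$ for the measure, $\log^{\Oh(1)} n$ for the hitting argument, and the recursion on components) by the branching factor $3$ per decision should give the stated $n^{\Oh(\log^5 n)}$. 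In the quasitame case the same argument gives $2^{\mathrm{polylog}(n)}$.

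The main obstacle is the hitting lemma: showing that when no single vertex is heavy, some vertex's neighborhood intersects a $1/\mathrm{polylog}$ fraction of the relevant balanced minimal separators. This is what replaces the Gyárfás path in the $P_t$-free setting. Tameness only bounds the number of separators and gives no direct structural path. My first attempt would be an averaging argument over pairs of active vertices that are far apart. Each balanced separator separates many such pairs. If no small set of vertices hits many separators, I would combine this with the minimality of separators, via full components, to build many distinct minimal separators by local modifications. The aim is to exceed the $n^c$ bound and obtain a contradiction.
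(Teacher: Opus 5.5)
\textbf{There is a genuine gap: your proposal never actually decomposes the instance.}

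The existence of a balanced separator dominated by few vertices is available; it is essentially Gartland--Lokshtanov, \cref{lem:dbs}. The step that fails comes after you guess the colors of the dominators: you claim that the instance ``splits into independent subinstances on the components, glued by the 2-SAT constraints on $S$.'' It does not. The vertices of $S$ now have $2$-element lists, but they are still present, since only singleton lists lead to deletion. The components of $G-S$ still contain vertices with list $\{1,2,3\}$, and these constrain $S$ in ways that are not 2-clauses.

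For example, take a vertex with list $\{1,2,3\}$ adjacent to $s_1,s_2,s_3\in S$ with lists $\{1,2\},\{2,3\},\{3,1\}$. It forbids exactly $(1,2,3)$ and $(2,3,1)$, so it imposes a not-all-equal constraint on the three Boolean choices. Gluing the components' boundary relations is therefore a general Boolean CSP, not 2-SAT. Enumerating colorings of $S$ costs $2^{|S|}$ with $|S|$ unbounded, so nothing splits.

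Your ``hitting lemma'' does not repair this. It is unproven, and ``relevant'' separators are never defined. Moreover, shrinking lists deletes no vertices, so there is no reason the number of (balanced) minimal separators should drop at all; the potential you propose need not move. The final accounting to $n^{\Oh(\log^5 n)}$ is also not derived.

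\textbf{What the paper does instead.} It never splits along a single separator. It accumulates $\Oh(\log n)$ centers $D_1,\dots,D_h$, each of size at most $2k$, whose closed neighborhoods were $\frac12$-balanced separators when found. It branches on vertices that are heavy for the potential $\sum_v (2^{\#\text{centers adjacent to } v}-1)$. This guarantees that no vertex is adjacent to all centers, and that greedy selection yields $h$ pairwise anticomplete centers.

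It then makes the sets $N(D_j)$ pairwise non-crossing by \emph{deleting} vertices, which is \cref{thm:cut}. For anticomplete $A,B$ with no common neighbor in $Z$:
\begin{enumerate}
\item The $A$-$B$-paths of length $3$ are exactly the edges of the interface graph.
\item For each color $i$, the vertices not colored $i$ form a vertex cover of the interface graph. Its minimal vertex covers are minimal $A$-$B$-separators, hence unions of fewer than $c$ minimal separators, so there are at most $2^{kc}$ of them.
\item Guess one minimal cover per color and remove color $i$ from the $i$-th cover. Then both endpoints of every interface edge have disjoint lists, so (R2) deletes one of them.
\item Repeat to reach distance $5$, then halve $Z$. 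This destroys all $A$-$B$-paths avoiding $N(A)\cap N(B)$ in $(2n)^{\Oh(kc)}$ branches.
\end{enumerate}
Finally, \cref{thm:noncrossing} (pairwise non-crossing balanced separators with empty common intersection) forces all components to have at most $n/2$ vertices. Recursing gives $n^{\Oh(k^2\log^3 n)}$, which is $n^{\Oh(\log^5 n)}$ for $k=\Oh(\log n)$.

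This is where tameness is really used: to bound the number of minimal vertex covers of the interface graph, turning color classes into a quasipolynomial guess that genuinely deletes vertices. Your proposal has no substitute for this step.
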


In particular, \cref{thm:main} generalizes the result of~\cite{ESGLS26} to all tame classes, albeit with a worse running time, and gives a quasipolynomial-time algorithm for \lcol{3} in chordal bipartite graphs, partially answering the question of Huang et al.~\cite{HuangJP15}.

\paragraph{Technical overview.}
Let us present a high-level overview of the algorithm behind \cref{thm:main}.
Let $(G,L)$ be an instance of \lcol{3}, where $G$ has $n$ vertices and fewer than $2^k$ minimal separators; for tame classes, $k = \Oh(\log n)$.
We branch into $n^{\Oh(k^2 \log^2 n)}$ instances, obtained by deleting vertices and removing colors from lists, such that $(G,L)$ is a yes-instance if and only if one of them is, and every component of each of them has at most $n/2$ vertices.
Then we recurse on the components.

The branching scheme is inspired by the work of Gartland and Lokshtanov~\cite{GartlandL20,GartlandL23II}.
They showed that there is a set $D$ of at most $2k$ vertices such that every component of $G - N[D]$ has at most $n/2$ vertices, where $N[D]$ consists of $D$ and its neighbors~\cite{GartlandL23II}.
We guess the coloring of $D$ and delete $D$; afterwards, every remaining neighbor of $D$ has at most two colors in its list.
Repeating this and using a potential-function argument, we obtain $\Oh(\log n)$ such sets $D_1,\ldots,D_h$ with no edges between them, such that no vertex is adjacent to all of them.
Let $S_i$ be the set of neighbors of $D_i$.
By another result of Gartland and Lokshtanov~\cite{GartlandL23II}, if for all $i \neq j$ no component of $G - (S_i \cap S_j)$ contained vertices of both $S_i \setminus S_j$ and $S_j \setminus S_i$, then every component of $G$ would have at most $n/2$ vertices.

The main new ingredient is a way to ensure this condition, i.e., to destroy all paths from $D_i$ to $D_j$ that avoid $S_i \cap S_j$, by deleting vertices from the instance.
More generally, let $A$ and $B$ be sets at distance at least $3$ such that $G[A \cup B]$ has $c$ components.
The paths of length $3$ from $A$ to $B$ correspond to edges $xy$ with $x$ adjacent to $A$ and $y$ adjacent to $B$, and in every proper coloring, the vertices not colored $i$ cover all such edges, for each color $i$.
The inclusion-minimal sets covering these edges are inclusion-minimal sets separating $A$ from $B$ in an induced subgraph of $G$, so each of them is the union of fewer than $c$ minimal separators, and there are at most $2^{kc}$ of them.
Guessing one such set for each color and removing the color from its vertices, we ensure that one endpoint of each such edge is left with at most one color, and can be deleted.
This increases the distance between $A$ and $B$ to $4$, and a similar step increases it to $5$.
Then no vertex is at distance at most $2$ from both $A$ and $B$, which allows us to recurse on a problem where only half of the vertices may serve as internal vertices of the paths.

\section{Preliminaries}

All logarithms are binary, and for a positive integer $n$, we write $[n] = \{1,\ldots,n\}$.

\paragraph{Graph theory.}
All graphs considered in this paper are finite and simple.
Let $G$ be a graph.
For a vertex $v$, the \emph{neighborhood} $N_G(v)$ of $v$ is the set of neighbors of $v$, and the \emph{closed neighborhood} of $v$ is $N_G[v] = N_G(v) \cup \{v\}$.
For a set $X \subseteq V(G)$, the \emph{neighborhood} $N_G(X)$ of $X$ is the set of vertices outside $X$ that have a neighbor in $X$, and the \emph{closed neighborhood} of $X$ is $N_G[X] = X \cup N_G(X)$.
We omit the subscript if the graph is clear from the context.
By $G[X]$ we denote the subgraph of $G$ induced by $X$, and for any set $X$ we write $G - X$ for $G[V(G) \setminus X]$.
A \emph{component} of $G$ is a maximal connected subgraph of $G$, and $\cc(G)$ denotes the number of components of $G$.
A set $X \subseteq V(G)$ is \emph{independent} if no edge of $G$ has both endpoints in $X$, and it is a \emph{vertex cover} of $G$ if every edge of $G$ has an endpoint in $X$.
Thus $X$ is independent if and only if $V(G) \setminus X$ is a vertex cover of $G$.
A vertex cover is \emph{minimal} if no proper subset of it is a vertex cover.
Two disjoint sets $X,Y \subseteq V(G)$ are \emph{anticomplete} if there is no edge between $X$ and $Y$; we also say that $X$ is anticomplete to $Y$.

We denote a path by the sequence $v_0v_1\ldots v_m$ of its vertices; the vertices $v_0$ and $v_m$ are its \emph{endpoints}, and the other vertices are \emph{internal}.
The \emph{length} of a path is the number of its edges.
For a vertex $v$ and a set $A \subseteq V(G)$, the \emph{distance} between $v$ and $A$ in $G$ is the minimum length of a path in $G$ from $v$ to a vertex of $A$, or $\infty$ if there is no such path; in particular, it is $0$ if $v \in A$.
For disjoint sets $A,B \subseteq V(G)$, an \emph{$A$-$B$-path} is a path with one endpoint in $A$, the other endpoint in $B$, and all internal vertices outside $A \cup B$.
By $\dist_G(A,B)$ we denote the minimum length of an $A$-$B$-path in $G$, or $\infty$ if there is no such path.
Note that $\dist_G(A,B)$ is the minimum distance between a vertex of $B$ and $A$.
For pairwise disjoint sets $A,B,Z \subseteq V(G)$, an \emph{$A$-$B$-path in $G$ through $Z$} is an $A$-$B$-path in $G$ all of whose internal vertices belong to $Z$.
The \emph{distance between $A$ and $B$ through $Z$}, denoted by $\dist_G(A,B \mid Z)$, is the minimum length of an $A$-$B$-path in $G$ through $Z$, or $\infty$ if there is no such path.
Equivalently, $\dist_G(A,B \mid Z) = \dist_{G[A \cup B \cup Z]}(A,B)$.
Note that if $Z' \subseteq Z$, then $\dist_G(A,B \mid Z') \geq \dist_G(A,B \mid Z)$.

\paragraph{List 3-Coloring.}
Consider an instance $(G,L)$ of \lcol{3}.
We say that $(G',L')$ is a \emph{subinstance} of $(G,L)$ if $G'$ is an induced subgraph of $G$ and $L'(v) \subseteq L(v)$ for every $v \in V(G')$.
A family $\cI$ of subinstances of $(G,L)$ is \emph{equivalent} to $(G,L)$ if $(G,L)$ is a yes-instance if and only if at least one instance in $\cI$ is a yes-instance.
Note that a subinstance of a subinstance of $(G,L)$ is a subinstance of $(G,L)$.
Consequently, equivalence is transitive: if $\cI$ is equivalent to $(G,L)$, and for every $(G',L') \in \cI$ the family $\cI_{(G',L')}$ of subinstances of $(G',L')$ is equivalent to $(G',L')$, then $\bigcup_{(G',L') \in \cI} \cI_{(G',L')}$ is equivalent to $(G,L)$.

We consider the following two reduction rules for \lcol{3}.
\begin{enumerate}[({R}1)]
    \item If there is a vertex with an empty list, then $(G,L)$ is a no-instance.
    \item If there is a vertex $v$ with $|L(v)| = 1$, then remove the unique color in $L(v)$ from the lists of neighbors of $v$, and remove $v$.
\end{enumerate}
Their correctness is straightforward, and they can be applied in polynomial time.
An instance $(G,L)$ is \emph{reduced} if none of the above rules can be applied to it.

\paragraph{Minimal separators.}
For two disjoint sets $A,B \subseteq V(G)$, a set $S \subseteq V(G) \setminus (A \cup B)$ is an \emph{$A$-$B$-separator} if no component of $G-S$ contains vertices from both $A$ and $B$.
A set $S$ is a \emph{minimal $A$-$B$-separator} if it is an $A$-$B$-separator and no proper subset of $S$ is an $A$-$B$-separator.
For distinct vertices $u,v \in V(G)$, we write \emph{$u$-$v$-separator} instead of $\{u\}$-$\{v\}$-separator.
A \emph{minimal separator} in $G$ is a set that is a minimal $u$-$v$-separator for some $u,v \in V(G)$.
By $\mu(G)$ we denote the number of minimal separators in $G$.
One can readily verify (see e.g.~\cite[p.~3064]{GartlandL23I}) that if $G'$ is an induced subgraph of $G$, then $\mu(G') \leq \mu(G)$.

\begin{theorem}[Berry, Bordat, Cogis~\cite{BerryBC00}]\label{thm:bbc}
Given a graph $G$, all its minimal separators can be enumerated in time $|V(G)|^{\Oh(1)} \cdot (\mu(G)+1)$.
\end{theorem}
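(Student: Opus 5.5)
The plan is to prove the three ingredients of the Berry--Bordat--Cogis algorithm: a closure characterization of minimal separators, a completeness argument, and an accounting of the running time. First I will use the standard characterization: a set $S \subseteq V(G)$ is a minimal separator if and only if $G-S$ has at least two \emph{full} components, i.e., components $C$ with $N(C)=S$. Two consequences follow.

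(i) For every vertex $v$ and every component $C$ of $G-N[v]$, the set $N(C)$ is a minimal separator. Indeed, $N(C)\subseteq N(v)$, so the component of $G-N(C)$ containing $v$ is full, and $C$ itself is full.

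(ii) For every minimal separator $S$, every $x\in S$, and every component $C$ of $G-(S\cup N(x))$, the set $N(C)$ is a minimal separator. Here $C$ is full. Moreover, every vertex of $N(C)$ lies in $S\cup N(x)$ and is adjacent either to $x$ or to a full component of $S$ not meeting $C$. Hence the component of $G-N(C)$ containing $x$ together with such a full component is also full to $N(C)$. Care is needed to pick a full component of $S$ that is disjoint from $C$; since $S$ has at least two full components and $C$ meets at most one, this is possible.

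The algorithm is a graph search over minimal separators. It seeds the set of found separators with all sets from (i). It then repeatedly takes an unprocessed separator $S$ and adds all sets from (ii) for all $x\in S$. Found separators are stored in a trie keyed by sorted vertex lists, so membership tests cost $\Oh(n)$. Each separator is processed once and produces at most $n^2$ candidates, each computed in polynomial time. The seeding produces at most $n^2$ candidates, which is polynomially many even when $\mu(G)=0$. This gives total time $|V(G)|^{\Oh(1)}\cdot(\mu(G)+1)$, provided that every minimal separator is reached.

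The main obstacle is completeness: every minimal separator $T$ must be obtained from the seeds by repeated applications of (ii). My plan is as follows. Fix $T$ with two full components $A$ and $B$, and pick $a\in A$. Consider the component $B'$ of $G-N[a]$ that contains $B$; note that $B\cap N[a]=\emptyset$. Then $S_0=N(B')$ is a seed from (i), and $S_0\subseteq A\cup T$. Among reached separators $S$ such that $S\subseteq A\cup T$ and $B$ lies in a full component $B_S$ of $G-S$, choose one maximizing $|S\cap T|$. If $S\neq T$, some $x\in S\setminus T$ lies in $A$. Let $C$ be the component of $G-(S\cup N(x))$ containing $B$. I would then show that $S'=N(C)$ still satisfies $S'\subseteq A\cup T$ and has $B$ in a full component. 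In addition, $S'$ must strictly increase $|S'\cap T|$, or else keep $|S'\cap T|$ while strictly decreasing $|S'\cap A|$; this needs a lexicographic potential rather than the single measure $|S\cap T|$, which I adopt.

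The key point is that $x\in S$ has a neighbor in $B_S$, and this neighbor is removed or absorbed. Hence $S'$ "moves toward" $B$ while never entering $B\cup(V\setminus(A\cup T\cup B))$, since those vertices are separated from $A$ by $T$. Termination of the potential forces $S=T$. I expect the delicate part to be verifying that $S'\subseteq A\cup T$ and that the potential strictly improves. If a direct argument fails, I would fall back on a weaker goal: show $S'\cap A\subsetneq S\cap A$ with $S'\cap T\supseteq S\cap T$.
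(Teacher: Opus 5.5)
The paper does not prove this theorem; it only cites Berry, Bordat, and Cogis, so your proposal has to stand on its own. The algorithmic part is right and is the BBC algorithm. Both (i) and (ii) are correct. In (ii) you should also note that the chosen full component $D$ of $G-S$ is disjoint from $N(C)$: this holds because $N(C)\subseteq S\cup N(x)$, $D\cap S=\emptyset$, and a vertex of $D$ adjacent to $C$ would put $C$ and $D$ in the same component of $G-S$. The running-time accounting is fine.

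The gap is in completeness, and it is more than a verification issue: the lexicographic potential you adopt need not improve. Take vertices $a,x,y_1,y_2,t,b$ with edges $ax,xy_1,xy_2,y_1t,y_2t,tb$, and let $T=\{t\}$, $A=\{a,x,y_1,y_2\}$, $B=\{b\}$. The seed from $a$ is $S_0=\{x\}$. The only step available yields $S'=\{y_1,y_2\}$, since $x$ is the only vertex of $S_0\setminus T$ and $\{t,b\}$ is the component of $G-\{a,x,y_1,y_2\}$ containing $B$. Here $|S'\cap T|=|S_0\cap T|=0$ while $|S'\cap A|=2>1$, so both your potential and your fallback $S'\cap A\subsetneq S\cap A$ fail.

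The measure that works is the one you mention only in passing, namely $|B_S|$:
\begin{enumerate}
\item $C$ is connected, avoids $S$, and contains $B$, so $C\subseteq B_S$.
\item $C$ is a full component of $G-S'$, so $B_{S'}=C$.
\item $x$ has a neighbor $y\in B_S$ because $B_S$ is full, and $y\in N(x)$, so $y\notin C$.
\end{enumerate}
Thus $|B_{S'}|<|B_S|$.

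Second, your claim that $S\neq T$ yields some $x\in S\setminus T$ overlooks the case $S\subsetneq T$. Your invariant does not exclude it, since minimal separators can be nested. Take edges $as,at,bs,bt,cs$ with $T=\{s,t\}$, $A=\{a\}$, $B=\{b\}$. The seed from $c$ is $\{s\}$, which is contained in $A\cup T$, and $B$ lies in the full component $\{a,t,b\}$ of $G-\{s\}$. Yet $\{s\}\setminus T=\emptyset$.

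The fix is to add to the invariant that some full component $D\neq B_S$ of $G-S$ meets $A$. It holds for $S_0$ (take the component containing $a$). It is preserved, since your proof of (ii) shows that the component of $G-S'$ containing $x\in A$ is full and differs from $C$. Now suppose $S\subseteq T$. Then $A\cap S=\emptyset$ forces $A\subseteq D$. Any $t\in T\setminus S$ is adjacent to both $A$ and $B$, so it would join $D$ to $B_S$; hence $S=T$.

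With this strengthened invariant and descent on $|B_S|$, iterating from $S_0$ reaches $T$, which completes the completeness argument.
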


\paragraph{Balanced separators.} For $\delta \in (0,1)$, a set $S \subseteq V(G)$ is a \emph{$\delta$-balanced separator} in $G$ if every component of $G-S$ has at most $\delta|V(G)|$ vertices.

\begin{lemma}[Gartland, Lokshtanov~\cite{GartlandL23II}]\label{lem:dbs}
    Let $k \geq 1$ be an integer, and let $G$ be a graph with $\mu(G) < 2^k$.
    Then there is a set $D \subseteq V(G)$ of size at most $2k$ such that $N[D]$ is a $\frac{1}{2}$-balanced separator.
\end{lemma}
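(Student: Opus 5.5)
The approach is a potential-function argument. We build $D$ greedily and charge each step to the destruction of a constant fraction of a family of minimal separators, so that after at most $2k$ steps nothing of the family is left. Concretely, maintain $D$ (initially empty). Let $C$ be the unique component of $G-N[D]$ with more than $|V(G)|/2$ vertices, if one exists; if none exists we are done. As the potential take the set $\mathcal{M}_D$ of minimal separators $S$ of $G$ that are disjoint from $N[D]$ and meet $N[C]$. Initially $|\mathcal{M}_\emptyset| \le \mu(G) < 2^k$. The goal is to show that we can add to $D$ at most two vertices of $C$ so that $|\mathcal{M}_D|$ at least halves. Then after at most $k$ rounds, i.e., $|D| \le 2k$, either there is no heavy component or $\mathcal{M}_D=\emptyset$. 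The latter will lead to a contradiction.

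First, the base case. If $C$ is heavy and $\mathcal{M}_D=\emptyset$, we derive a contradiction from $N[C]\ne V(G)$ and $C\ne\emptyset$. Take $u\in C$ and $w\notin N[C]$; the latter exists because $|C|>n/2$ forces nothing, so it must be argued via $D\ne\emptyset$ or $C\ne V(G)$. Then $N(C)$ separates $u$ from $w$, and it contains a minimal $u$-$w$-separator $S\subseteq N(C)$. Moreover $S$ is disjoint from $N[D]$ as long as $N(C)\cap N[D]$ can be avoided. Since $N(C)\subseteq N(D)$, this fails as stated, so I will adjust the definition of the family. I will instead count minimal separators of the form $N(C')$, where $C'$ is a component of $G-N[v]$ for some $v$. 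As observed, these are genuine minimal separators, since the component containing $v$ and $C'$ are both full to $N(C')$. I will also require that such separators are not contained in $N[D]$. The base case then says that when $C$ is heavy, some vertex $v\in C$ produces a separator that is still fresh.

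Second, the halving step. This is the main obstacle. The natural attempt is averaging: pick $v\in C$ for which the largest component of $C-N[v]$ is as small as possible, or pick $v$ uniformly from $C$. Then show that for at least half of the surviving separators $S$, either $S$ meets $N[v]$ or $S$ is cut off from the new heavy component (which lies inside $C-N[v]$). A separator with a full component outside the new heavy component $C_1$ and meeting $N[C_1]$ only in $N(C_1)$ is killed after one more step. This is why I expect two vertices per round, hence the bound $2k$ rather than $k$.

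The key point to verify is that one step of the greedy process makes progress on separators, not merely on the size of $C$. I would first try the following dichotomy. Either the new heavy component $C_1\subseteq C-N[v]$ has $N(C_1)$ not previously counted, or we pick a second vertex from the other side to kill half of the remaining ones. If this accounting fails, the fallback is to replace halving by the weaker statement that each added vertex destroys at least one separator while the heavy component persists. That gives a bound of $\mu(G)$ instead of $2k$, so the halving argument is really needed.
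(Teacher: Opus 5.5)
\textbf{There is a genuine gap: the halving step, which is the entire content of the lemma, is never proved.}

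You need that adding at most two vertices of the heavy component $C$ to $D$ at least halves the counted family. The proposal only lists things to try: averaging, a tentative dichotomy, and a fallback that, as you note yourself, gives only $|D|\le\mu(G)$. Nothing explains why one or two vertices should destroy half of the surviving separators, and a naive choice does not. If $G$ is a long path, its minimal separators are the singletons of its internal vertices. Putting into $D$ a vertex at distance about $n/10$ from an end leaves a heavy component and destroys at most about a tenth of them, under either of your definitions.

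\textbf{The potential is also not in working order.} Your first family fails, as you observe ($N(C)\subseteq N(D)$). For the replacement family (sets $N(C')$ with $C'$ a component of $G-N[v]$, not contained in $N[D]$), the invariant you need at the end is ``a heavy component exists $\Rightarrow$ the family is nonempty'', and it is false. Let $G$ consist of a clique $K\cup\{s\}$ with $|K|\ge 3$ and a vertex $d$ adjacent only to $s$. Then $\mu(G)=1<2^1$. Suppose your process puts $d$ into $D$, which it permits from $C=V(G)$, e.g.\ under a uniform choice. Then $K$ is a heavy component of $G-N[D]$, while every set $N(C')$ equals $\{s\}\subseteq N[D]$. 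So emptying the family certifies nothing, and the final contradiction is unavailable without a carefully analyzed choice rule, which the proposal does not contain.

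\textbf{The missing idea is an explicit exponential witness rather than a per-step halving.} The paper does not prove this lemma; it cites Gartland and Lokshtanov, where it follows from two facts:
(i) a graph containing a $k$-creature has at least $2^k$ minimal separators;
(ii) a graph with no $k$-creature has a $\frac12$-balanced separator dominated by at most $2k$ vertices.

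Roughly, a $k$-creature consists of pairwise disjoint sets $A$, $B$, $\{x_1,\ldots,x_k\}$, $\{y_1,\ldots,y_k\}$ with these properties: $G[A]$ and $G[B]$ are connected, $A$ is anticomplete to $B$, each $x_i$ has a neighbor in $A$ and none in $B$, each $y_i$ has a neighbor in $B$ and none in $A$, and $x_iy_j\in E(G)$ if and only if $i=j$.

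Fact (i) is easy. For each of the $2^k$ ways of choosing one endpoint of every edge $x_iy_i$, add the unchosen $x_i$'s to $A$ and the unchosen $y_i$'s to $B$. The enlarged sets remain connected and anticomplete. Every separator between them contains all chosen endpoints and none of the unchosen ones, and a minimal such separator is a minimal separator of $G$. Hence distinct choices give distinct minimal separators (compare the vertex-cover argument in \cref{lem:interface}).

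All the work lies in (ii): showing that if no $2k$ vertices dominate a balanced separator, then a $k$-creature can be extracted. Your scheme would have to contain an argument of exactly this kind, and the ``two vertices per round'' heuristic does not supply it.
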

We remark that \cref{lem:dbs} follows from combining two results:
that every graph that contains a so-called \emph{$k$-creature} has at least $2^k$ minimal separators~\cite[Lemma~7.6.18]{Gartland23}, and that every graph that does not contain a $k$-creature admits a balanced separator dominated by at most $2k$ vertices~\cite[Lemma~7.4.7]{Gartland23}.

We say that two sets $S_1,S_2 \subseteq V(G)$ are \emph{crossing} in $G$ if there is a component of $G - (S_1 \cap S_2)$ that intersects both $S_1 \setminus S_2$ and $S_2 \setminus S_1$.
Otherwise, $S_1$ and $S_2$ are \emph{non-crossing}.

\begin{theorem}[Gartland, Lokshtanov~\cite{GartlandL23II}]\label{thm:noncrossing}
    Let $G$ be a graph, let $\delta \in (0,1)$, and let $\cS$ be a family of pairwise non-crossing $\delta$-balanced separators in $G$ such that no vertex of $G$ belongs to all sets in $\cS$.
    Then every component of $G$ has at most $\delta|V(G)|$ vertices.
\end{theorem}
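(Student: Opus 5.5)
The plan is to prove by induction that the intersection of any nonempty subfamily of $\cS$ is itself a $\delta$-balanced separator. Since no vertex belongs to all sets in $\cS$, the intersection of the whole family is empty. A $\delta$-balanced separator that is empty means exactly that every component of $G$ has at most $\delta|V(G)|$ vertices, which is the statement.

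\textbf{Base step: two sets.} Let $S_1,S_2\in\cS$ and $T=S_1\cap S_2$, and let $K$ be a component of $G-T$. Because $S_1$ and $S_2$ are non-crossing, $K$ misses $S_1\setminus S_2$ or misses $S_2\setminus S_1$.
\begin{itemize}
\item If $K$ misses $S_2\setminus S_1$, then $K$ is disjoint from $S_2$, since it also avoids $T$. So $K$ is a connected subgraph of $G-S_2$ and lies inside one component of $G-S_2$. Hence $|V(K)|\le\delta|V(G)|$.
\item Symmetrically, if $K$ misses $S_1\setminus S_2$, then $K$ lies in a component of $G-S_1$ and is again small.
\end{itemize}
So $T$ is a $\delta$-balanced separator. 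The same argument works for any two sets $X,Y$ that are $\delta$-balanced separators and non-crossing, not only for members of $\cS$.

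\textbf{Iteration.} To continue, I would show that non-crossing is preserved under intersection: if $S_1,S_2,S_3$ are pairwise non-crossing, then $T=S_1\cap S_2$ and $S_3$ are non-crossing. Then I replace $S_1,S_2$ by $T$ in the family and repeat until one set remains.

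\textbf{Main obstacle: preserving non-crossing.} Suppose a component $K$ of $G-(T\cap S_3)$ contains $x\in T\setminus S_3$ and $y\in S_3\setminus T$. Take an $x$-$y$ path $P$ in $K$ whose interior avoids $T\cup S_3$; choosing the first vertex of $S_3\setminus T$ and the last vertex of $T\setminus S_3$ along a walk gives such a path.
\begin{itemize}
\item $x\in S_1\setminus S_3$ and $x\in S_2\setminus S_3$.
\item $y\notin T$, so without loss of generality $y\notin S_1$, and thus $y\in S_3\setminus S_1$.
\end{itemize}
The path $P$ may pass through $S_1\cap S_3$ at vertices outside $S_2$, which is where the difficulty lies. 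In that case I would shorten $P$ to its last vertex $z\in S_1\cap S_3$, so $z\notin S_2$. Then I would argue with the pair $(S_2,S_3)$, or with the pair $(S_1,S_2)$ using the segment from $x$ to $z$, to exhibit a crossing component and get a contradiction. This requires a careful case analysis on which of $S_1,S_2$ the internal vertices of $P$ hit.

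\textbf{Fallback.} If this preservation argument fails, I would avoid intersections altogether and argue extremally instead. Suppose some component $C$ of $G$ has more than $\delta|V(G)|$ vertices. Then every $S\in\cS$ meets $C$. Pick $S_1\in\cS$ with $S_1\cap C$ inclusion-minimal among the finitely many candidates. Pick $v\in S_1\cap C$ and $S_2\in\cS$ with $v\notin S_2$. Using the two-set lemma above, I would argue that the component structure of $S_1\cap S_2$ inside $C$ yields some member of $\cS$ whose trace on $C$ is strictly smaller than $S_1\cap C$, contradicting minimality.
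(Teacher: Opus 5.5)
The paper does not prove this statement; it is quoted from Gartland and Lokshtanov, so your argument has to stand on its own. Your plan is sound. If pairwise non-crossing $\delta$-balanced separators stay $\delta$-balanced under intersection, then the empty total intersection is a $\delta$-balanced separator, which is the claim. Your two-set step is correct, and as you note it holds for any non-crossing pair of $\delta$-balanced separators. The gap is the preservation step. The whole induction rests on it, and you leave it as an unspecified ``careful case analysis''. The fallback does not replace it. The sets produced by intersecting are not members of $\cS$, so choosing a member of $\cS$ with minimal trace on $C$ gives no contradiction. To fix that you would have to pass to the closure of $\cS$ under intersections, and showing that closure is pairwise non-crossing is exactly the missing lemma.

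In fact the difficulty you describe cannot occur for the path you chose. You took $P$ with its interior disjoint from $T\cup S_3$, so the interior avoids $S_3$. Also $x\notin S_3$, and after your symmetric choice $y\notin S_1$. Hence no vertex of $P$ lies in $S_1\cap S_3$. So $P$ lies in one component of $G-(S_1\cap S_3)$ that contains $x\in S_1\setminus S_3$ and $y\in S_3\setminus S_1$. That means $S_1$ and $S_3$ cross, a contradiction.

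Equivalently, take any $x$-$y$ path in $K$ and stop at its first vertex in $S_3$. That vertex is not in $S_1\cap S_2$, say not in $S_1$, and the subpath up to it avoids $S_1\cap S_3$.

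This argument uses only that the three sets are pairwise non-crossing, so it also applies to the intermediate intersections. With it, your induction over the finite family $\cS$ goes through. As written, though, your proposal says that $P$ ``may pass through $S_1\cap S_3$'', which contradicts your own construction, and it never closes this step.
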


\section{Separating two sets}

In this section we prove \cref{thm:cut}, which is one of the key ingredients of our algorithm.
The setting is as follows.
We are given a graph $G$ with fewer than $2^k$ minimal separators and an instance $(G',L')$ of \lcol{3}, where $G'$ is an induced subgraph of $G$.
We are also given pairwise disjoint sets $A,B,Z \subseteq V(G)$, where $G[A \cup B]$ has at most $c$ components and $Z \subseteq V(G')$.
Note that $A$ and $B$ do not need to be contained in $V(G')$; this is why, apart from the instance $(G',L')$, we need the graph $G$.
Our aim is to destroy all $A$-$B$-paths in $G$ through $Z$ by removing vertices of $Z$ from the instance, i.e., to obtain subinstances $(G'',L'')$ such that $\dist_G(A,B \mid Z \cap V(G'')) = \infty$.
We show that this can be done if $\dist_G(A,B \mid Z) \geq 3$, i.e., $A$ and $B$ are anticomplete and do not have common neighbors in $Z$.

\begin{theorem}\label{thm:cut}
    Let $G$ be an $n$-vertex graph with $\mu(G) < 2^k$, and let $(G',L')$ be an instance of \lcol{3}, where $G'$ is an induced subgraph of $G$.
    Let $A,B,Z \subseteq V(G)$ be pairwise disjoint sets such that $Z \subseteq V(G')$ and $G[A \cup B]$ has at most $c$ components.
    Suppose that $\dist_G(A,B \mid Z) \geq 3$.
    Then in time $n^{\Oh(1)} \cdot (4n)^{6kc}$ one can compute a family $\cI$ of at most $(2n)^{6kc}$ subinstances of $(G',L')$ such that:
    \begin{enumerate}
        \item\label{it:cut:equiv} $\cI$ is equivalent to $(G',L')$, and
        \item\label{it:cut:dist} for every $(G'',L'') \in \cI$, we have $\dist_G(A,B \mid Z \cap V(G'')) = \infty$.
    \end{enumerate}
\end{theorem}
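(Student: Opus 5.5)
The plan is to follow the strategy from the technical overview: first push the distance between $A$ and $B$ through $Z$ from $3$ up to $5$ by guessing vertex covers, and then recurse on a smaller problem.

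\emph{Basic step (distance $3$ to $4$).} Let $X = N_G(A)\cap Z$ and $Y = N_G(B)\cap Z$; since $\dist_G(A,B\mid Z)\ge 3$, these sets are disjoint. Let $E_3$ be the set of edges $xy$ of $G$ with $x\in X$ and $y\in Y$; these are exactly the middle edges of the $A$-$B$-paths of length $3$ through $Z$. Put $H=G[A\cup B\cup X\cup Y]$. A set $C\subseteq X\cup Y$ is a vertex cover of $E_3$ if and only if it is an $A$-$B$-separator in $H$, because every $A$-$B$-path in $H$ has to use such an edge.
\begin{itemize}
\item Fix a proper coloring $\phi$ of $(G',L')$ and a color $i$. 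The vertices of $X\cup Y$ with $\phi(v)\ne i$ cover $E_3$, because the color class $i$ is independent. Hence this set contains an inclusion-minimal $A$-$B$-separator $C_i$ of $H$.
\item Every minimal $A$-$B$-separator of $H$, where $H[A\cup B]$ has at most $c$ components, is a union of at most $c$ (more precisely, fewer than $c$ relevant) minimal separators of $H$. One way to see this: for each component $K$ of $H[A\cup B]$, take the boundary of the component of $H-C$ containing $K$.
\item Since $H$ is an induced subgraph of $G$, we have $\mu(H)\le\mu(G)<2^k$. So there are at most $2^{kc}$ candidates, and by \cref{thm:bbc} they can be enumerated in the required time.
\end{itemize}
We branch on a triple $(C_1,C_2,C_3)$, which gives at most $2^{3kc}$ branches, and remove color $i$ from every vertex of $C_i$. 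Let $xy\in E_3$. For each $i$, at least one of $x,y$ lies in $C_i$. By pigeonhole over the three colors, one endpoint loses at least two colors, so its list has size at most $1$. We then apply (R1) and (R2), which removes that endpoint from the instance. Afterwards no edge of $E_3$ survives inside $Z\cap V(G'')$, so the distance through the remaining part of $Z$ is at least $4$. Equivalence holds because the branch that matches the true coloring $\phi$ keeps $\phi$ valid.

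\emph{Second step (distance $4$ to $5$).} The paths of length $4$ have the form $A$-$x$-$z$-$y$-$B$. Let $A_1=A\cup X$ and $B_1=B\cup Y$.
\begin{itemize}
\item Every vertex of $X$ is adjacent to $A$, so $G[A_1\cup B_1]$ still has at most $c$ components.
\item A length-$4$ path becomes a length-$2$ path between $A_1$ and $B_1$. This is not directly of the required form, since that form needs distance at least $3$.
\end{itemize}
So instead I would repeat the vertex-cover argument on the edge sets $\{xz\}$ and $\{zy\}$. Concretely, take $z$ with neighbours in both $X$ and $Y$, and use the pairs $(A, B_1)$ and $(A_1,B)$, each of which is at distance $3$. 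This costs another $2^{3kc}$ guesses, for $2^{6kc}$ in total. Each application only deletes vertices of $Z$, so distances never decrease.

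\emph{Finishing and main obstacle.} Once the distance is at least $5$, no vertex of $Z$ is within distance $2$ of both $A$ and $B$. Now set $A'=A\cup N_Z(A)$, $B'=B\cup N_Z(B)$ and $Z'=Z\setminus(A'\cup B')$. Then $\dist_G(A',B'\mid Z')\ge 3$, the number of components is still at most $c$, and we can recurse. Alternatively, we can cut $Z$ by distance from $A$ versus from $B$ so that only half of the vertices act as internal vertices, as in the overview.

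The main obstacle is the accounting in this recursion. Naively, each level multiplies the number of branches by $2^{6kc}$, while the statement allows only $(2n)^{6kc}$ in total. I would first try to make the guesses of all levels simultaneously: guess one minimal $A$-$B$-separator per color in the \emph{whole} graph $G[A\cup B\cup Z]$, indexed by a distance layer in $[n]$. This is where the extra factor $n$ in $(2n)^{6kc}$ would come from. For this I need to show that the pigeonhole over three colors still forces, on every surviving path, a vertex with list size at most one. I expect this part to require the most care.
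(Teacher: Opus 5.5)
The proposal has a genuine gap in the last stage: it never establishes a step that halves $|Z|$, so it cannot reach the bound $(2n)^{6kc}$. Your first two stages are essentially the paper's \cref{lem:dist4,lem:dist5}.

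\textbf{The gap.} The recursion you actually write down, $A'=A\cup N_Z(A)$, $B'=B\cup N_Z(B)$, $Z'=Z\setminus(A'\cup B')$, is correct but strips only two layers off $Z$. Its depth is therefore bounded only linearly in $|Z|$. With $2^{6kc}$ branches per level, you can only bound $|\cI|$ by $2^{6kc}$ raised to a power linear in $n$.

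You then treat the accounting as the main obstacle and propose guessing all levels ``simultaneously'' via distance layers. You give no argument that one round of guesses destroys all longer paths. In fact there is no accounting obstacle: $(2n)^{6kc}=(2^{6kc})^{1+\log n}$ is exactly $2^{6kc}$ branches in each of $1+\log n$ rounds, provided each round at least halves $|Z|$.

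What is missing is a precise halving step, which is the content of \cref{lem:halving}; this is where distance $5$ is really used.
\begin{itemize}
\item Work in $F=G[A\cup B\cup Z]$ with $\dist_F(A,B)\geq 5$. Let $Q$ be the set of vertices at distance at most $2$ from $B$, and let $C$ be the union of the components of $F-Q$ meeting $A$.
\item Distance $5$ gives $A\cap Q=\emptyset$ and puts $Q$ at distance $\geq 3$ from $A$. Since $C$ avoids $Q$, it is at distance $\geq 3$ from $B$.
\item $Z$ is the disjoint union of $C\setminus A$ and $Z\setminus C$. Take $(A',B',Z')=(A,Q,C\setminus A)$ if $|C\setminus A|\leq |Z|/2$, and $(C,B,Z\setminus C)$ otherwise.
\item Every vertex of $Q$ is joined to $B$ inside $Q$, and every vertex of $C$ is joined to $A$ inside $C$. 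Hence $\cc(G[A'\cup B'])\leq \cc(G[A\cup B])$.
\item Every $A$-$B$-path through $Z$ contains an $A'$-$B'$-path through $Z'$: stop at its first vertex in $Q$, resp.\ start at its last vertex in $C$. So destroying the latter paths destroys the former.
\end{itemize}

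Your one-line alternative can also be completed this way. Split $Z$ by comparing $d_A(v)$ and $d_B(v)$, the distances in $F$ from $v$ to $A$ and to $B$. Put the side not chosen as $Z'$ into $A'$ or $B'$, drop vertices reachable from neither, and use $d_A(v)+d_B(v)\geq 5$. But the four properties above are exactly what must be verified, and the proposal does not do this.

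\textbf{Two smaller points.} First, your justification that a minimal $A$-$B$-separator $C$ of $H$ is a union of few minimal separators is wrong as stated: the boundary of a component of $H-C$ need not be a minimal separator. Take $A=\{a\}$, $B=\{b_1,b_2\}$, and $H$ the union of the paths $ax_1y_1b_1$ and $ax_2y_2b_2$. Then $C=\{x_1,x_2\}$ is a minimal $A$-$B$-separator. The boundary of the component $\{a\}$ is $\{x_1,x_2\}$, which is not a minimal separator, since no other component of $H-C$ is adjacent to both $x_1$ and $x_2$. The paper (\cref{lem:union}) instead repeatedly adds a minimal $u$-$v$-separator contained in $C$, for some $u\in A$, $v\in B$ still in the same component. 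The number of components meeting $A\cup B$ strictly increases, so at most $c-1$ steps occur.

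Second, in your second stage a single application of the first stage, to $A$, $B\cup Y$ and $Z\setminus Y$, already suffices. Here $Y$ must be recomputed in the surviving instance. On a path $axzyb$ the vertices $x,z$ lie within distance $2$ of $A$, hence outside $Y$. The second pair $(A\cup X,B)$ is redundant, and using both pairs would cost $2^{9kc}$ per round rather than $2^{6kc}$.
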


We prove \cref{thm:cut} by gradually increasing the distance between $A$ and $B$ through the surviving vertices of $Z$, i.e., the value $\dist_G(A,B \mid Z \cap V(G''))$ for the constructed subinstances $(G'',L'')$: first from $3$ to $4$, then to $5$, and finally to $\infty$.

\subsection{From distance 3 to 4}\label{sec:dist4}

We start with two purely graph-theoretic lemmas.

\begin{lemma}\label{lem:union}
    Let $G$ be a graph and let $A,B \subseteq V(G)$ be disjoint nonempty sets such that $G[A \cup B]$ has at most $c$ components.
    Then every minimal $A$-$B$-separator in $G$ is the union of at most $c-1$ minimal separators in $G$.
\end{lemma}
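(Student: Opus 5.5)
Let $S$ be a minimal $A$-$B$-separator in $G$. We first handle the degenerate cases. If $A$ and $B$ lie in different components of $G$, then $S=\emptyset$, which is the empty union. If some vertex of $A$ is adjacent to some vertex of $B$, then no $A$-$B$-separator exists, since $S$ must avoid $A\cup B$. So assume otherwise. Let $C_A$ be the set of components of $G-S$ that meet $A$, and $C_B$ the set of components that meet $B$; these sets are disjoint. We use the standard characterization: a set $T$ is a minimal separator in $G$ if and only if $G-T$ has at least two \emph{full} components, i.e., components $C$ with $N(C)=T$.

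We start from minimality of $S$. For every $s\in S$, the set $S\setminus\{s\}$ is not an $A$-$B$-separator. Hence $s$ has a neighbor in some component of $C_A$ and a neighbor in some component of $C_B$. Now contract the structure into an auxiliary bipartite graph $H$ whose vertex set is $C_A\cup C_B$. For every $s\in S$ and every pair $X\in C_A$, $Y\in C_B$ with $s\in N(X)\cap N(Y)$, put an edge $XY$ labelled by $s$. For an edge $XY$ of $H$, let $T_{XY}$ be the set of vertices of $S\cap N(X)$ that have a path to $Y$ through $Y\cup\ldots$. More simply, take the component $D$ of $G-N(X)$ containing $Y$ and set $T=N(D)$. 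The standard argument shows that $T$ is a minimal separator: $D$ is full to $T$ by construction, and the component of $G-T$ containing $X$ is also full. The latter holds because $T\subseteq N(X)$ and $X$ is connected and disjoint from $T$.

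Two things must be checked. First, $T\subseteq S$. This requires care, since $N(X)\subseteq S$ holds because $X$ is a component of $G-S$; also $T\subseteq N(X)\subseteq S$. Second, the vertex $s$ lies in $T$. This holds because $s\in N(X)$ and $s$ is adjacent to $Y\subseteq D$. Thus every $s\in S$ is covered by some minimal separator $T_{X,Y}\subseteq S$.

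It remains to bound the number of such sets needed by $c-1$, and this is the main obstacle. The plan is to choose the pairs along a spanning structure. The components of $G[A\cup B]$ are partitioned among the components in $C_A\cup C_B$, so $|C_A|+|C_B|\le c$. Fix $X\in C_A$ and apply the construction above with $Y$ replaced by the union side. Concretely, for each $X\in C_A$, let $D_1,\dots,D_m$ be the components of $G-N(X)$ meeting $B$. Each $N(D_i)$ is a minimal separator contained in $N(X)$. Every vertex of $N(X)$ has a neighbor in $B$'s side, and so lies in some $N(D_i)$. Moreover, distinct $D_i$ contain distinct components of $C_B$, so $m\le |C_B|$. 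This covers $S$ with at most $|C_A|\cdot|C_B|$ sets, which is not yet good enough.

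To improve the bound, we process $H$ via a spanning forest. Pick an edge $XY$ in each step. The set $N(D)$, where $D$ is the component of $G-N(X)$ containing $Y$, is full to the components in $C_B$ within $D$, and more generally it covers all of $N(X)\cap N(Y)$. The plan is to show that $N(X)$ can be covered by the sets indexed by edges of a spanning forest of $H$. Concretely, one merges $X$ with its neighbors step by step: replace the pair $X,Y$ by the connected set $X\cup Y\cup (N(X)\cap N(Y))$ and recurse on the instance with one fewer part. This is an induction on $|C_A|+|C_B|$ in which each step spends one minimal separator and merges two parts. A forest on at most $c$ vertices has at most $c-1$ edges, which gives the bound $c-1$.
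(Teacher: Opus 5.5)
Your first half is correct: for $X\in C_A$ and $Y\in C_B$, the set $T_{XY}=N(D)$ is a minimal separator, where $D$ is the component of $G-N(X)$ containing $Y$. It satisfies $N(X)\cap N(Y)\subseteq T_{XY}\subseteq N(X)\subseteq S$, so $S$ is a union of minimal separators. But the bound $c-1$, which is the actual content of the lemma, is not proved.

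The merging induction is not well defined. The merged set $X\cup Y\cup(N(X)\cap N(Y))$ contains vertices of both $A$ and $B$, so the ``instance with one fewer part'' is no longer an $A$-$B$-separation instance. You never say what the new sets and separator are, or why the rest of $S$ is a minimal separator there, so the induction hypothesis has nothing to apply to.

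Moreover, the underlying claim is false when no rule is given for choosing the forest. That claim is that the sets $T_{XY}$, over the edges $XY$ of a spanning forest of $H$, cover $S$. Let $G$ have vertices $a,a',b,b',s_1,s_2,s_3,s_4$, with all edges given by $N(s_1)=\{a,b\}$, $N(s_2)=\{a,a',b\}$, $N(s_3)=\{a,a',b'\}$, $N(s_4)=\{a,b'\}$, and let $A=\{a,a'\}$, $B=\{b,b'\}$, $S=\{s_1,s_2,s_3,s_4\}$. Then $S$ is a minimal $A$-$B$-separator and $c=4$. Identifying singleton components with their vertices, $H$ is the $4$-cycle $a\,b\,a'\,b'$. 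Take the spanning tree with edges $a'b$, $a'b'$, $ab'$. You get $T_{a'b}=T_{a'b'}=\{s_2,s_3\}$, and $T_{ab'}=N(b')=\{s_3,s_4\}$ because $G-N(a)=G-S$ is edgeless. So $s_1$ is not covered.

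What is missing is an adaptive choice of separators together with a monotone count, with the covering coming from minimality of $S$ rather than vertex by vertex. This is the paper's argument. Start with $S_0=\emptyset$. While $S_i$ is not an $A$-$B$-separator, pick $u\in A$ and $v\in B$ in a common component of $G-S_i$, and set $S_{i+1}=S_i\cup P_i$ for a minimal $u$-$v$-separator $P_i\subseteq S$. The number of components of $G-S_i$ meeting $A\cup B$ lies in $[1,c]$ and strictly increases at each step. So the process stops after at most $c-1$ steps with an $A$-$B$-separator contained in $S$, which equals $S$ by minimality. Your $T_{XY}$, with $X\ni u$ and $Y\ni v$ the components of $G-S$, would serve as $P_i$, since it separates $X$ from $Y$. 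However, the pair must be chosen relative to the current $S_i$, not from a fixed spanning forest of $H$.
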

\begin{proof}
    Let $S$ be a minimal $A$-$B$-separator.
    We define a sequence $S_0 \subseteq S_1 \subseteq \ldots$ of subsets of $S$ as follows.
    Let $S_0 = \emptyset$, and for $i \geq 0$ proceed as follows.
    If $S_i$ is an $A$-$B$-separator, stop.
    Otherwise, choose $u_i \in A$ and $v_i \in B$ that belong to the same component of $G - S_i$.
    Since $S$ is an $A$-$B$-separator, it is also a $u_i$-$v_i$-separator, so it contains a minimal $u_i$-$v_i$-separator $P_i$, which is a minimal separator in $G$.
    Set $S_{i+1} = S_i \cup P_i$.

    For $T \subseteq S$, let $q(T)$ be the number of components of $G-T$ that intersect $A \cup B$.
    Since $T \cap (A \cup B) = \emptyset$, each component of $G[A \cup B]$ is contained in exactly one component of $G - T$, and each component of $G - T$ that intersects $A \cup B$ contains a component of $G[A \cup B]$.
    Thus $1 \leq q(T) \leq c$.

    \begin{claim}\label{cl:union:q}
        If $S_{i+1}$ is defined, then $q(S_{i+1}) > q(S_i)$.
    \end{claim}
    \begin{claimproof}
        As $S_i \subseteq S_{i+1}$, every component of $G - S_{i+1}$ is contained in a component of $G - S_i$.
        Every component of $G-S_i$ that intersects $A \cup B$ contains at least one component of $G-S_{i+1}$ that intersects $A \cup B$.
        Moreover, the component of $G - S_i$ containing $u_i$ and $v_i$ contains at least two such components, as $u_i$ and $v_i$ are separated by $P_i \subseteq S_{i+1}$.
    \end{claimproof}

    By \cref{cl:union:q}, the process stops at some $S_j$ with $j \leq c-1$.
    The set $S_j \subseteq S$ is an $A$-$B$-separator and the union of $j$ minimal separators in $G$.
    By the minimality of $S$, we have $S_j = S$.
\end{proof}

Let $G$ be a graph and let $A,B \subseteq V(G)$ be disjoint sets with $\dist_G(A,B) \geq 3$, i.e., $N[A] \cap N[B] = \emptyset$.
The \emph{interface graph} of $A$ and $B$ in $G$ is the bipartite graph with vertex set $N(A) \cup N(B)$ whose edges are all edges of $G$ between $N(A)$ and $N(B)$.
Note that the $A$-$B$-paths of length $3$ in $G$ are exactly the paths $axyb$ such that $a \in A$, $b \in B$, and $xy$ is an edge of the interface graph.

\begin{lemma}\label{lem:interface}
    Let $G$ be a graph with $\mu(G) < 2^k$, and let $A,B \subseteq V(G)$ be disjoint nonempty sets with $\dist_G(A,B) \geq 3$, such that $G[A \cup B]$ has at most $c$ components.
    Let $H$ be the interface graph of $A$ and $B$ in $G$.
    Then every minimal vertex cover of $H$ is the union of at most $c-1$ minimal separators in $G[N[A \cup B]]$.
    Consequently, $H$ has at most $2^{k(c-1)}$ minimal vertex covers, and they can be enumerated in time $|V(G)|^{\Oh(1)} \cdot 2^{k(c-1)}$.
\end{lemma}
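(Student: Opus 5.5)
The plan is to realize every minimal vertex cover of $H$ as a minimal $A$-$B$-separator in a suitable induced subgraph of $G$, and then apply \cref{lem:union}. Let $G^\ast = G[A \cup B \cup V(H)]$, where $V(H) = N(A) \cup N(B)$; note that $V(G^\ast) = N[A \cup B]$, and that $N(A)$ and $N(B)$ are disjoint because $\dist_G(A,B) \geq 3$. Let $X$ be a minimal vertex cover of $H$. I will show that $X$ is a minimal $A$-$B$-separator in $G^\ast$.

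The separation part goes as follows. Consider any $A$-$B$-path in $G^\ast - X$. Its first internal vertex lies in $N(A)$ and its last one in $N(B)$. Every vertex of $G^\ast$ outside $A \cup B$ lies in exactly one of $N(A)$, $N(B)$. Moreover, $A$ is anticomplete to $N(B)$, and symmetrically $B$ is anticomplete to $N(A)$. Hence somewhere along the path there is an edge from $N(A)$ to $N(B)$, i.e., an edge of $H$ with both endpoints outside $X$. This is impossible, so $X$ is an $A$-$B$-separator in $G^\ast$.

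For minimality, take $x \in X$. Since $X$ is a minimal vertex cover, some edge $xy$ of $H$ has $y \notin X$. Say $x \in N(A)$ and $y \in N(B)$; pick $a \in A$ adjacent to $x$ and $b \in B$ adjacent to $y$. Then $axyb$ is an $A$-$B$-path in $G^\ast - (X \setminus \{x\})$, so $X \setminus \{x\}$ is not a separator. One point needs care: $X$ is disjoint from $A \cup B$, and $X$ may also contain isolated vertices of $H$. The latter is not a problem, since a minimal vertex cover contains no isolated vertex. If $A$ and $B$ lie in different components, the empty set is the unique minimal cover and also the unique minimal separator, which is consistent.

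With this in hand, \cref{lem:union} applied to $G^\ast$ (with $G^\ast[A \cup B] = G[A \cup B]$ having at most $c$ components) shows that $X$ is the union of at most $c-1$ minimal separators of $G^\ast = G[N[A \cup B]]$. For counting, since $\mu(G^\ast) \leq \mu(G) < 2^k$, the number of ways to choose such a union is at most $(\mu(G^\ast)+1)^{c-1} \le 2^{k(c-1)}$; here the $+1$ allows repetitions, or equivalently an empty choice. For enumeration, list all minimal separators of $G^\ast$ by \cref{thm:bbc}, form all unions of at most $c-1$ of them, and for each check in polynomial time whether it is a minimal vertex cover of $H$. The main (mild) obstacle is the separation argument, namely making sure that every path crosses an $H$-edge; this rests on the partition of $V(G^\ast)\setminus(A\cup B)$ into $N(A)$ and $N(B)$.
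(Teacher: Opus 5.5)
Your proof is correct and takes essentially the same route as the paper. The paper identifies the minimal vertex covers of $H$ with the minimal $A$-$B$-separators in $G[N[A\cup B]]$ by proving the slightly cleaner fact that the $A$-$B$-separators there are exactly the vertex covers of $H$, and then applies \cref{lem:union} and \cref{thm:bbc}. The one detail you leave implicit is that $c \geq 2$ (since $A$ and $B$ are nonempty and anticomplete); this is what lets the $|V(G)|^{\Oh(1)}\cdot 2^k$ cost of listing the minimal separators fit within the bound $|V(G)|^{\Oh(1)}\cdot 2^{k(c-1)}$.
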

\begin{proof}
    Replacing $G$ by $G[N[A \cup B]]$ changes neither $H$ nor $G[A \cup B]$, and does not increase the number of minimal separators.
    Thus we may assume that $V(G) = N[A \cup B]$.
    Let $X = N(A)$ and $Y = N(B)$.
    Since $N[A] \cap N[B] = \emptyset$, the sets $A,B,X,Y$ are pairwise disjoint, $A$ is anticomplete to $B \cup Y$, and $B$ is anticomplete to $X$.
    Thus the only edges of $G$ between $A \cup X$ and $B \cup Y$ are the edges between $X$ and $Y$, i.e., the edges of $H$.

    \begin{claim}\label{cl:interface:vc}
        The $A$-$B$-separators in $G$ are exactly the vertex covers of $H$.
    \end{claim}
    \begin{claimproof}
        Both the $A$-$B$-separators in $G$ and the vertex covers of $H$ are subsets of $V(G) \setminus (A \cup B) = V(H)$.
        If $C$ is a vertex cover of $H$, then $G - C$ has no edges between $(A \cup X) \setminus C$ and $(B \cup Y) \setminus C$, so $C$ is an $A$-$B$-separator in $G$.
        Conversely, if some edge $xy$ of $H$ with $x \in X$ and $y \in Y$ satisfies $x,y \notin C$, then for any neighbor $a \in A$ of $x$ and any neighbor $b \in B$ of $y$, the path $axyb$ is contained in $G - C$, so $C$ is not an $A$-$B$-separator in $G$.
    \end{claimproof}

    By \cref{cl:interface:vc}, the minimal vertex covers of $H$ are exactly the minimal $A$-$B$-separators in $G$.
    By \cref{lem:union}, each of them is the union of at most $c-1$ minimal separators in $G$.
    This proves the first statement.
    Thus each minimal vertex cover of $H$ is the union of a sequence of exactly $c-1$ sets, each of which is $\emptyset$ or a minimal separator in $G$.
    As there are at most $\mu(G)+1 \leq 2^k$ choices for each set, $H$ has at most $2^{k(c-1)}$ minimal vertex covers.
    To enumerate them, we list all minimal separators in $G$ using \cref{thm:bbc} and check all such unions; this takes time $|V(G)|^{\Oh(1)} \cdot 2^{k(c-1)}$, as $c \geq 2$ because $A$ and $B$ are nonempty and anticomplete.
\end{proof}

Now we are ready to increase the distance from $3$ to $4$.

\begin{lemma}\label{lem:dist4}
    Let $G$ be an $n$-vertex graph with $\mu(G) < 2^k$, and let $(G',L')$ be an instance of \lcol{3}, where $G'$ is an induced subgraph of $G$.
    Let $A,B,Z \subseteq V(G)$ be pairwise disjoint sets such that $A$ and $B$ are nonempty, $Z \subseteq V(G')$, and $G[A \cup B]$ has at most $c$ components.
    Suppose that $\dist_G(A,B \mid Z) \geq 3$.
    Then in time $n^{\Oh(1)} \cdot 2^{3kc}$ one can compute a family $\cI$ of at most $2^{3kc}$ subinstances of $(G',L')$ such that:
    \begin{enumerate}
        \item $\cI$ is equivalent to $(G',L')$, and
        \item for every $(G'',L'') \in \cI$, we have $\dist_G(A,B \mid Z \cap V(G'')) \geq 4$.
    \end{enumerate}
\end{lemma}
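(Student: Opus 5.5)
Work in $G_Z := G[A \cup B \cup Z]$. It is an induced subgraph of $G$, so $\mu(G_Z) \le \mu(G) < 2^k$. By the hypothesis, $\dist_{G_Z}(A,B) \geq 3$. Let $H$ be the interface graph of $A$ and $B$ in $G_Z$. Its vertex set is $X \cup Y$, where $X = N_{G_Z}(A) \subseteq Z$ and $Y = N_{G_Z}(B) \subseteq Z$. The $A$-$B$-paths through $Z$ of length $3$ are exactly $axyb$ with $xy \in E(H)$. So it suffices to produce subinstances in which, for every edge $xy$ of $H$, at least one of $x,y$ has been deleted. Deleting vertices only increases distances through $Z$, and the distance was already at least $3$, so it then becomes at least $4$.

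\textbf{Branching on colour classes.} In any proper colouring $\phi$ of $G'$ respecting $L'$ and any colour $i \in \{1,2,3\}$, the set $\phi^{-1}(i) \cap V(H)$ is independent in $H$, because $H$ is a subgraph of $G'$ (recall $Z \subseteq V(G')$). Hence $V(H) \setminus \phi^{-1}(i)$ is a vertex cover of $H$, and it contains a minimal vertex cover $C_i$. By \cref{lem:interface} applied to $G_Z$, we can enumerate all minimal vertex covers of $H$; there are at most $2^{k(c-1)}$ of them. We branch over all triples $(C_1,C_2,C_3)$ of minimal vertex covers, giving at most $2^{3k(c-1)} \le 2^{3kc}$ branches. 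In the branch $(C_1,C_2,C_3)$ we set $L''(v) = L'(v) \setminus \{i : v \in C_i\}$ and leave the graph unchanged. This is equivalent to $(G',L')$: if $\phi$ exists, the branch given by its own covers still admits $\phi$, and conversely every branch is a subinstance.

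\textbf{Deleting an endpoint of every edge.} Fix an edge $xy$ of $H$. For each colour $i$, the set $C_i$ contains $x$ or $y$. By pigeonhole over three colours, one endpoint, say $x$, lies in at least two of the $C_i$, so $|L''(x)| \le 1$. If $L''(x) = \emptyset$, the branch is a no-instance and we may drop it, or replace it by a trivial no-subinstance such as the single vertex $x$ with empty list, whose $Z$-part does not contain both endpoints of any edge of $H$. If $|L''(x)|=1$, we apply (R2) to $x$: remove its colour from its neighbours' lists and delete $x$. This is correct and produces a subinstance.

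\textbf{Main obstacle.} The subtlety is doing these deletions simultaneously without their interactions breaking anything. Applying (R2) to one vertex can shrink the lists of its neighbours, which may create further singleton or empty lists. But that only causes more deletions or detection of a no-instance, both of which are harmless. So I would apply (R2) exhaustively to the set $W$ of all vertices of $V(H)$ with $|L''| \le 1$, handling empty lists via (R1). Every edge of $H$ has an endpoint in $W$, so the resulting $G''$ satisfies $V(G'') \cap Z$ containing no edge of $H$. The running time is dominated by the enumeration in \cref{lem:interface} and the $2^{3kc}$ branches, each handled in polynomial time.
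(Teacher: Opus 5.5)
Your proof is correct and follows the paper's argument. You work in $G[A\cup B\cup Z]$, enumerate the minimal vertex covers of the interface graph via \cref{lem:interface}, branch over triples $(C_1,C_2,C_3)$, remove color $i$ from the vertices of $C_i$, and then apply the reduction rules. The only difference is cosmetic: you use pigeonhole to show that one endpoint of each edge of $H$ keeps at most one color, while the paper notes that the two endpoints' lists become disjoint, so both endpoints cannot survive in a reduced instance.
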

\begin{proof}
    Let $F = G[A \cup B \cup Z]$, and recall that $\dist_F(A,B) = \dist_G(A,B \mid Z) \geq 3$.
    Let $H$ be the interface graph of $A$ and $B$ in $F$.
    Its vertex set $N_F(A) \cup N_F(B)$ is contained in $Z \subseteq V(G')$, and since $G'$ is an induced subgraph of $G$, every edge of $H$ is an edge of $G'$.
    Moreover, $\mu(F) \leq \mu(G) < 2^k$ and $F[A \cup B] = G[A \cup B]$ has at most $c$ components.
    By \cref{lem:interface}, the family $\cC$ of all minimal vertex covers of $H$ has size at most $2^{k(c-1)}$ and can be computed in time $n^{\Oh(1)} \cdot 2^{k(c-1)}$.

    For every triple $\mathbf{C} = (C_1,C_2,C_3) \in \cC^3$, we define the list assignment $L_{\mathbf{C}}$ of $G'$ by $L_{\mathbf{C}}(v) = L'(v) \setminus \{i \mid v \in C_i\}$ for $v \in V(G')$.
    We exhaustively apply the reduction rules to $(G',L_{\mathbf{C}})$; if (R1) is applied, we discard the instance, and otherwise we include the obtained reduced instance in $\cI$.
    We have $|\cI| \leq |\cC|^3 \leq 2^{3k(c-1)} \leq 2^{3kc}$, and the running time is $n^{\Oh(1)} \cdot 2^{3kc}$.

    Let us show that $\cI$ is equivalent to $(G',L')$.
    As the reduction rules are correct and $L_{\mathbf{C}}(v) \subseteq L'(v)$ for every $v \in V(G')$, if some instance in $\cI$ is a yes-instance, then so is $(G',L')$.
    Conversely, let $\psi$ be a proper coloring of $(G',L')$.
    For each $i \in [3]$, the set $\psi^{-1}(i) \cap V(H)$ is independent in $G'$, and thus in $H$, so $V(H) \setminus \psi^{-1}(i)$ is a vertex cover of $H$; let $C_i \in \cC$ be a minimal vertex cover of $H$ contained in it, and let $\mathbf{C} = (C_1,C_2,C_3)$.
    For every $v \in V(G')$, we have $v \notin C_{\psi(v)}$, so $\psi(v) \in L_{\mathbf{C}}(v)$.
    Hence, $\psi$ is a proper coloring of $(G',L_{\mathbf{C}})$, so by the correctness of the reduction rules, the instance obtained from $\mathbf{C}$ is not discarded and is a yes-instance.

    \begin{claim}\label{cl:dist4:edgeless}
        For every $(G'',L'') \in \cI$, no edge of $H$ has both endpoints in $V(G'')$.
    \end{claim}
    \begin{claimproof}
        Let $(G'',L'') \in \cI$ be obtained from $\mathbf{C}$, and suppose that $xy$ is an edge of $H$ with $x,y \in V(G'')$.
        Since $(G'',L'')$ is reduced, we have $|L''(x)|,|L''(y)| \geq 2$.
        On the other hand, $L''(x) \subseteq L_{\mathbf{C}}(x)$ and $L''(y) \subseteq L_{\mathbf{C}}(y)$, and these sets are disjoint: for every color $i$, the vertex cover $C_i$ contains $x$ or $y$, so $i$ is removed from $L'(x)$ or from $L'(y)$.
        Two disjoint subsets of $[3]$ cannot both have at least two elements, a contradiction.
    \end{claimproof}

    Finally, consider $(G'',L'') \in \cI$.
    As $Z \cap V(G'') \subseteq Z$, we have $\dist_G(A,B \mid Z \cap V(G'')) \geq \dist_G(A,B \mid Z) \geq 3$.
    Every $A$-$B$-path in $G$ through $Z \cap V(G'')$ is an $A$-$B$-path in $F$, so if it has length $3$, then it is of the form $axyb$, where $xy$ is an edge of $H$ and $x,y \in V(G'')$.
    By \cref{cl:dist4:edgeless}, there is no such path, and thus $\dist_G(A,B \mid Z \cap V(G'')) \geq 4$.
\end{proof}

\subsection{From distance 4 to 5}\label{sec:dist5}

To increase the distance from $4$ to $5$, we move all vertices of $Z$ adjacent to $B$ into $B$, and apply \cref{lem:dist4}.

\begin{lemma}\label{lem:dist5}
    Let $G$ be an $n$-vertex graph with $\mu(G) < 2^k$, and let $(G',L')$ be an instance of \lcol{3}, where $G'$ is an induced subgraph of $G$.
    Let $A,B,Z \subseteq V(G)$ be pairwise disjoint sets such that $A$ and $B$ are nonempty, $Z \subseteq V(G')$, and $G[A \cup B]$ has at most $c$ components.
    Suppose that $\dist_G(A,B \mid Z) \geq 4$.
    Then in time $n^{\Oh(1)} \cdot 2^{3kc}$ one can compute a family $\cI$ of at most $2^{3kc}$ subinstances of $(G',L')$ such that:
    \begin{enumerate}
        \item $\cI$ is equivalent to $(G',L')$, and
        \item for every $(G'',L'') \in \cI$, we have $\dist_G(A,B \mid Z \cap V(G'')) \geq 5$.
    \end{enumerate}
\end{lemma}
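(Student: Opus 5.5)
Following the hint, let $B' = B \cup (N_G(B) \cap Z)$ and $Z' = Z \setminus N_G(B)$, and apply \cref{lem:dist4} to $G$, $(G',L')$ and the sets $A, B', Z'$. These three sets are pairwise disjoint, $B'$ is nonempty, and $Z' \subseteq Z \subseteq V(G')$.

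First I check the hypotheses of \cref{lem:dist4}.
\begin{itemize}
\item \emph{Components.} Each vertex of $N_G(B)\cap Z$ has a neighbor in $B$, so it joins an existing component of $G[B]$. Hence $G[A \cup B']$ has no more components than $G[A\cup B]$, i.e., at most $c$.
\item \emph{Distance.} I claim $\dist_G(A,B' \mid Z') \geq 3$. Take an $A$-$B'$-path $P$ through $Z'$ and ending at $b' \in B'$.
  \begin{itemize}
  \item If $b' \in B$, then $P$ is already an $A$-$B$-path through $Z$, so its length is at least $4$.
  \item If $b' \in N(B)\cap Z$, extend $P$ by an edge $b'b$ with $b \in B$. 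The internal vertices of $P$ lie in $Z'$, which avoids $N(B)$, so none of them lies in $B$ or is adjacent to $B$; the same holds for $A$ and $B$, since $\dist_G(A,B\mid Z)\ge 4$ forces them to be anticomplete. So the extension is an $A$-$B$-path through $Z$ of length $|P|+1$. Hence $|P|+1 \geq 4$, i.e., $|P| \geq 3$.
  \end{itemize}
\end{itemize}

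\cref{lem:dist4} then gives an equivalent family $\cI$ of at most $2^{3kc}$ subinstances within the stated time, and every $(G'',L'')\in\cI$ satisfies $\dist_G(A,B' \mid Z' \cap V(G'')) \geq 4$.

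It remains to show $\dist_G(A,B \mid Z \cap V(G'')) \geq 5$. Suppose $Q = a z_1 \dots z_{m-1} b$ is an $A$-$B$-path through $Z\cap V(G'')$ of length $m \leq 4$. Then $m = 4$, since distances only grow when $Z$ shrinks. The vertex $z_{m-1}$ lies in $N(B)\cap Z \subseteq B'$. Let $z_j$ be the first internal vertex of $Q$ in $N(B)$. By the argument above, $z_j$ exists and $j\ge 1$. The subpath $a z_1\dots z_j$ is an $A$-$B'$-path whose internal vertices lie in $Z' \cap V(G'')$, so its length $j$ is at least $4$. But $j \leq m-1 = 3$, a contradiction.

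The only delicate point is the hypothesis checks of \cref{lem:dist4} and this final prefix argument. In particular I must confirm that $A$ does not meet $N(B)$, which holds because $A$ and $Z$ are disjoint and $A$ and $B$ are anticomplete. I must also confirm that the vertices moved into $B'$ need not remain in $V(G'')$ for the argument to work: they are the endpoint of the prefix, not internal vertices. Since $z_j$ lies on $Q$, it is in $V(G'')$ anyway, so nothing breaks.
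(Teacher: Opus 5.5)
Your proof is correct and follows the paper's approach. Your $B' = B \cup (N_G(B)\cap Z)$ and $Z' = Z\setminus N_G(B)$ coincide with the paper's $B^+ = N_F[B]$ and $Z^- = Z \setminus B^+$ for $F = G[A\cup B\cup Z]$, since $A$ is anticomplete to $B$, and you apply \cref{lem:dist4} in the same way. The only cosmetic difference is the final step: you take the first internal vertex lying in $N(B)$, whereas the paper argues that $z_1,z_2$ are too close to $A$ in $F$ to lie in $B^+$.
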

\begin{proof}
    Let $F = G[A \cup B \cup Z]$, and recall that $\dist_F(A,B) = \dist_G(A,B \mid Z) \geq 4$.
    Let $B^+ = N_F[B]$ and $Z^- = Z \setminus B^+$.
    As every vertex of $B^+ \setminus B$ has a neighbor in $B$, every vertex of $B^+$ is at distance at least $3$ from $A$ in $F$.
    In particular, $B^+ \subseteq B \cup Z$, so the sets $A,B^+,Z^-$ are pairwise disjoint; moreover, $B^+ \neq \emptyset$ and $Z^- \subseteq V(G')$.
    Every component of $G[A \cup B^+]$ intersects $A \cup B$, so $G[A \cup B^+]$ has at most $\cc(G[A \cup B]) \leq c$ components.
    Finally, every $A$-$B^+$-path in $G$ through $Z^-$ is a path in $F$ that ends in $B^+$, so it has length at least $3$, i.e., $\dist_G(A,B^+ \mid Z^-) \geq 3$.

    Thus we can apply \cref{lem:dist4} to $(G',L')$ and $A,B^+,Z^-$.
    In time $n^{\Oh(1)} \cdot 2^{3kc}$ we obtain a family $\cI$ of at most $2^{3kc}$ subinstances of $(G',L')$, which is equivalent to $(G',L')$, and such that $\dist_G(A,B^+ \mid Z^- \cap V(G'')) \geq 4$ for every $(G'',L'') \in \cI$.

    Consider $(G'',L'') \in \cI$.
    As $Z \cap V(G'') \subseteq Z$, we have $\dist_G(A,B \mid Z \cap V(G'')) \geq \dist_G(A,B \mid Z) \geq 4$.
    Suppose that there is an $A$-$B$-path in $G$ through $Z \cap V(G'')$ of length exactly $4$, say $az_1z_2z_3b$, where $a \in A$, $z_1,z_2,z_3 \in Z \cap V(G'')$, and $b \in B$.
    Then $z_3 \in B^+$, and $z_1,z_2 \notin B^+$, as they are at distance at most $2$ from $A$ in $F$.
    Thus $az_1z_2z_3$ is an $A$-$B^+$-path in $G$ through $Z^- \cap V(G'')$ of length $3$, a contradiction.
\end{proof}

Combining \cref{lem:dist4,lem:dist5}, we obtain the following.

\begin{corollary}\label{cor:dist5}
    Let $G$ be an $n$-vertex graph with $\mu(G) < 2^k$, and let $(G',L')$ be an instance of \lcol{3}, where $G'$ is an induced subgraph of $G$.
    Let $A,B,Z \subseteq V(G)$ be pairwise disjoint sets such that $A$ and $B$ are nonempty, $Z \subseteq V(G')$, and $G[A \cup B]$ has at most $c$ components.
    Suppose that $\dist_G(A,B \mid Z) \geq 3$.
    Then in time $n^{\Oh(1)} \cdot 2^{6kc}$ one can compute a family $\cI$ of at most $2^{6kc}$ subinstances of $(G',L')$ such that:
    \begin{enumerate}
        \item $\cI$ is equivalent to $(G',L')$, and
        \item for every $(G'',L'') \in \cI$, we have $\dist_G(A,B \mid Z \cap V(G'')) \geq 5$.
    \end{enumerate}
\end{corollary}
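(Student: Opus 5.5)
We apply \cref{lem:dist4} first and then \cref{lem:dist5} to each instance it produces, and take the union of all the resulting families.

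\textbf{Step 1.} Apply \cref{lem:dist4} to $(G',L')$ with $A,B,Z$; all its hypotheses coincide with those of the corollary. In time $n^{\Oh(1)}\cdot 2^{3kc}$ this gives a family $\cI_1$ of at most $2^{3kc}$ subinstances of $(G',L')$. The family $\cI_1$ is equivalent to $(G',L')$, and every $(G_1,L_1)\in\cI_1$ satisfies $\dist_G(A,B\mid Z\cap V(G_1))\ge 4$.

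\textbf{Step 2.} For each $(G_1,L_1)\in\cI_1$, set $Z_1=Z\cap V(G_1)$ and apply \cref{lem:dist5} to the instance $(G_1,L_1)$ with $A,B,Z_1$. Its hypotheses need to be checked:
\begin{itemize}
\item $G_1$ is an induced subgraph of $G'$, and hence of $G$.
\item $A,B,Z_1$ are pairwise disjoint, since $Z_1\subseteq Z$.
\item $Z_1\subseteq V(G_1)$ by definition.
\item $A$ and $B$ are nonempty, and $G[A\cup B]$ is unchanged, so it still has at most $c$ components.
\item $\dist_G(A,B\mid Z_1)\ge 4$ holds by Step 1.
\end{itemize}
The lemma yields a family $\cI_{(G_1,L_1)}$ of at most $2^{3kc}$ subinstances of $(G_1,L_1)$, equivalent to $(G_1,L_1)$. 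Every $(G'',L'')$ in it satisfies $\dist_G(A,B\mid Z_1\cap V(G''))\ge 5$. Since $V(G'')\subseteq V(G_1)$, we have $Z_1\cap V(G'')=Z\cap V(G'')$, which is the desired distance condition.

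\textbf{Step 3.} Set $\cI=\bigcup_{(G_1,L_1)\in\cI_1}\cI_{(G_1,L_1)}$.
\begin{itemize}
\item A subinstance of a subinstance of $(G',L')$ is a subinstance of $(G',L')$, and equivalence is transitive, as noted in the preliminaries. Hence $\cI$ is a family of subinstances equivalent to $(G',L')$.
\item $|\cI|\le 2^{3kc}\cdot 2^{3kc}=2^{6kc}$.
\item The running time is $n^{\Oh(1)}\cdot 2^{3kc}+2^{3kc}\cdot n^{\Oh(1)}\cdot 2^{3kc}=n^{\Oh(1)}\cdot 2^{6kc}$.
\end{itemize}

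Nothing here is a real obstacle. The only point needing care is rewriting the surviving set $Z\cap V(G_1)$ as the new $Z$ in Step 2, so that the hypotheses of \cref{lem:dist5} hold and its conclusion translates back to $Z\cap V(G'')$.
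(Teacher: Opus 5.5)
Your proposal is correct and matches the paper's proof: apply \cref{lem:dist4}, then apply \cref{lem:dist5} to each resulting instance $(G_1,L_1)$ with $Z \cap V(G_1)$ as the new $Z$. The union is equivalent to $(G',L')$ by transitivity of equivalence, and it has the same $2^{3kc}\cdot 2^{3kc}$ size and time bounds.
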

\begin{proof}
    We apply \cref{lem:dist4} to $(G',L')$ and $A,B,Z$.
    For every obtained instance $(G_1,L_1)$, we have $\dist_G(A,B \mid Z \cap V(G_1)) \geq 4$, so we can apply \cref{lem:dist5} to $(G_1,L_1)$ and $A,B,Z \cap V(G_1)$.
    Let $\cI$ be the family of all instances obtained in this way.
    By the transitivity of equivalence, $\cI$ is a family of at most $2^{3kc} \cdot 2^{3kc}$ subinstances of $(G',L')$ that is equivalent to $(G',L')$, and it is computed in time $n^{\Oh(1)} \cdot 2^{3kc} \cdot 2^{3kc}$.
    Moreover, if $(G'',L'') \in \cI$ is obtained from $(G_1,L_1)$, then $Z \cap V(G'') = (Z \cap V(G_1)) \cap V(G'')$, and thus $\dist_G(A,B \mid Z \cap V(G'')) \geq 5$.
\end{proof}

\subsection{From distance 5 to \texorpdfstring{$\infty$}{infinity}}\label{sec:distinf}

The crucial property of distance $5$ is that no vertex is at distance at most $2$ from both $A$ and $B$.
The following lemma uses this to reduce the problem to one where the distance is again at least $3$, but the set $Z$ is at most half as large.

\begin{lemma}\label{lem:halving}
    Let $G$ be a graph and let $A,B,Z \subseteq V(G)$ be pairwise disjoint sets such that $\dist_G(A,B \mid Z) \geq 5$.
    Then in polynomial time one can compute sets $A',B',Z' \subseteq V(G)$ such that:
    \begin{enumerate}
        \item\label{it:halving:sets} $A',B',Z'$ are pairwise disjoint, $A \subseteq A'$, $B \subseteq B'$, $Z' \subseteq Z$, and $|Z'| \leq |Z|/2$,
        \item\label{it:halving:cc} $\cc(G[A' \cup B']) \leq \cc(G[A \cup B])$,
        \item\label{it:halving:dist} $\dist_G(A',B' \mid Z') \geq 3$, and
        \item\label{it:halving:paths} every $A$-$B$-path in $G$ through $Z$ contains a subpath that is an $A'$-$B'$-path in $G$ through $Z'$.
    \end{enumerate}
\end{lemma}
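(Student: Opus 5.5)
We would work in $F = G[A \cup B \cup Z]$ and use the fact that $\dist_F(A,B) = \dist_G(A,B \mid Z) \geq 5$. The plan is to absorb everything close to one side into that side. Let $Z_A$ be the set of vertices of $Z$ at distance at most $2$ from $A$ in $F$, and define $Z_B$ symmetrically. These sets are disjoint: a vertex in both would give an $A$-$B$-walk of length at most $4$ in $F$, hence an $A$-$B$-path of length at most $4$ through $Z$. So one of them, say $Z_A$ (the other case is symmetric), satisfies $|Z_A| \geq |Z|/2$. We set
\[
A' = A \cup Z_A, \qquad B' = B, \qquad Z' = Z \setminus Z_A .
\]
Then $|Z'| \leq |Z|/2$, and property~\ref{it:halving:sets} is immediate. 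Computing $Z_A$ and $Z_B$ is a BFS in $F$, so the construction runs in polynomial time.

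For property~\ref{it:halving:cc}, take any $z \in Z_A$ and a shortest path in $F$ from $z$ to $A$. Every vertex on it is at distance at most $2$ from $A$, so it lies in $A \cup Z_A = A'$. Thus every vertex of $Z_A$ lies in a component of $G[A' \cup B']$ that meets $A$. Adding $Z_A$ can only merge components of $G[A \cup B]$, never create new ones, so $\cc(G[A' \cup B']) \leq \cc(G[A \cup B])$.

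For property~\ref{it:halving:dist} we check two things: that $A'$ and $B'$ are anticomplete, and that they have no common neighbour in $Z'$. This is where distance $5$ is really used.
\begin{itemize}
\item An edge between $A'$ and $B$ would start at a vertex at distance at most $2$ from $A$, giving an $A$-$B$-path through $Z$ of length at most $3$.
\item A vertex $z \in Z'$ adjacent to both $A'$ and $B$ is at distance at most $3$ from $A$. Together with its neighbour in $B$, this gives an $A$-$B$-path through $Z$ of length at most $4$.
\end{itemize}
Both contradict $\dist_F(A,B) \geq 5$. Also, the paths here lie in $G[A' \cup B' \cup Z'] \subseteq F$, so distances are measured consistently.

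For property~\ref{it:halving:paths}, take an $A$-$B$-path $P = a z_1 \ldots z_m b$ through $Z$. Let $u$ be the last vertex of $P$ lying in $A \cup Z_A = A'$; it exists because $a \in A$. Every vertex after $u$ other than $b$ is in $Z \setminus Z_A = Z'$, and $b \in B = B'$. Therefore the subpath of $P$ from $u$ to $b$ is an $A'$-$B'$-path through $Z'$.

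The only delicate point is the choice of how much to absorb into $A'$. Taking just $N_F[A]$ fails: it does not handle vertices at distance exactly $2$ in the path argument. Taking the full distance-$2$ ball works, and the remaining common-neighbour check relies precisely on $\dist \geq 5$.
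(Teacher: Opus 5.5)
There is a genuine gap in the size bound $|Z'| \leq |Z|/2$. Your checks of conditions 2--4 for $(A \cup Z_A, B, Z \setminus Z_A)$ are correct. But the claim that $|Z_A| \geq |Z|/2$ or $|Z_B| \geq |Z|/2$ does not follow from disjointness, because $Z_A$ and $Z_B$ need not cover $Z$. Vertices of $Z$ can be at distance at least $3$ from both $A$ and $B$ in $F$, or unreachable from both.

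For example, let $F$ be the path $a z_1 z_2 \ldots z_m b$ with $m \geq 5$. Then $Z_A = \{z_1,z_2\}$ and $Z_B = \{z_{m-1},z_m\}$, so $|Z'| = m-2 > |Z|/2$ whichever side you absorb. This halving is exactly what makes the recursion in the proof of \cref{thm:cut} stop after $\Oh(\log n)$ rounds, so it cannot be dropped. Absorbing a ball of bounded radius only removes vertices near the ends of the paths. Your closing remark correctly names the delicate point, namely how much to absorb, but the distance-$2$ ball is too little.

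The missing idea is to split $Z$ into two \emph{complementary} parts by letting one side grow with no radius bound. The paper takes $Q$ to be the set of vertices at distance at most $2$ from $B$ in $F$, and $C$ to be the union of the components of $F - Q$ that meet $A$. Then $Z$ is the disjoint union of $C \setminus A$ and $Z \setminus C$, so one of them has at most $|Z|/2$ vertices. Accordingly one sets $(A',B',Z') = (A,Q,C\setminus A)$ or $(C,B,Z\setminus C)$.

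Distance at least $3$ holds in both cases. Every vertex of $Q$ is at distance at least $3$ from $A$; this is where $\dist_G(A,B \mid Z) \geq 5$ is used. Every vertex of $C$ is at distance at least $3$ from $B$, since $C \cap Q = \emptyset$. The path condition follows by cutting an $A$-$B$-path at its first vertex in $Q$, respectively its last vertex in $C$.

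In the first case the vertices of $Z \setminus (C \cup Q)$ are discarded altogether. This is harmless, because every $A$-$B$-path through $Z$ stays inside $C$ until it enters $Q$.
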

\begin{proof}
    Let $F = G[A \cup B \cup Z]$, and recall that $\dist_F(A,B) = \dist_G(A,B \mid Z) \geq 5$.
    Let $Q = N_F[N_F[B]]$ be the set of vertices at distance at most $2$ from $B$ in $F$.
    Every vertex of $Q$ is at distance at least $3$ from $A$ in $F$; in particular, $A \cap Q = \emptyset$.
    Let $C$ be the union of the vertex sets of all components of $F - Q$ that intersect $A$; intuitively, $C$ consists of the vertices that can be reached from $A$ without coming close to $B$.
    Then $A \subseteq C$, and every vertex of $C$ is at distance at least $3$ from $B$ in $F$, as $C \cap Q = \emptyset$.
    As $B \subseteq Q$, we have $C \subseteq A \cup Z$, so $Z$ is the disjoint union of $C \setminus A$ and $Z \setminus C$.
    If $|C \setminus A| \leq |Z|/2$, we set $(A',B',Z') = (A,Q,C \setminus A)$, and otherwise we set $(A',B',Z') = (C,B,Z \setminus C)$.
    In both cases, $|Z'| \leq |Z|/2$, and clearly these sets can be computed in polynomial time.

    We first verify conditions~\ref{it:halving:cc} and~\ref{it:halving:dist}, simultaneously for both cases.
    Every vertex of $Q$ is joined to $B$ by a path all of whose vertices belong to $Q$, and every vertex of $C$ is joined to $A$ by a path all of whose vertices belong to $C$.
    Hence every component of $G[A' \cup B']$ intersects $A \cup B \subseteq A' \cup B'$, and thus $\cc(G[A' \cup B']) \leq \cc(G[A \cup B])$.
    Moreover, $A' \cup B' \cup Z' \subseteq V(F)$, so every $A'$-$B'$-path through $Z'$ is a path in $F$ that joins a vertex of $A'$ with a vertex of $B'$.
    In the first case, its endpoint in $B' = Q$ is at distance at least $3$ from $A' = A$ in $F$, and in the second case, its endpoint in $A' = C$ is at distance at least $3$ from $B' = B$ in $F$.
    Thus $\dist_G(A',B' \mid Z') \geq 3$.

    Now suppose that $(A',B',Z') = (A,Q,C \setminus A)$.
    Condition~\ref{it:halving:sets} holds as $A \cap Q = \emptyset$, $C \cap Q = \emptyset$, $B \subseteq Q$, and $C \setminus A \subseteq Z$.
    For condition~\ref{it:halving:paths}, let $v_0v_1\ldots v_m$ be an $A$-$B$-path through $Z$, where $v_0 \in A$ and $v_m \in B$.
    As $v_0 \notin Q$ and $v_m \in Q$, there is a minimum index $i \geq 1$ such that $v_i \in Q$.
    Then $v_0v_1\ldots v_{i-1}$ is a path in $F - Q$ that contains $v_0 \in A$, so $v_0,\ldots,v_{i-1} \in C$.
    As $v_1,\ldots,v_{i-1} \in Z$, they belong to $C \setminus A$, and thus $v_0v_1\ldots v_i$ is an $A$-$Q$-path through $C \setminus A$.

    Finally, suppose that $(A',B',Z') = (C,B,Z \setminus C)$.
    Condition~\ref{it:halving:sets} holds as $A \subseteq C$ and $C \cap B \subseteq C \cap Q = \emptyset$.
    For condition~\ref{it:halving:paths}, let $v_0v_1\ldots v_m$ be an $A$-$B$-path through $Z$, where $v_0 \in A$ and $v_m \in B$.
    As $v_0 \in C$ and $v_m \notin C$, there is a maximum index $i \leq m-1$ such that $v_i \in C$.
    Then $v_{i+1},\ldots,v_{m-1} \in Z \setminus C$, and thus $v_iv_{i+1}\ldots v_m$ is a $C$-$B$-path through $Z \setminus C$.
\end{proof}

Now we are ready to prove \cref{thm:cut}.
We use \cref{cor:dist5} to increase the distance between $A$ and $B$ to $5$, and then \cref{lem:halving} to reduce the problem to one where $Z$ is at most half as large, which we solve recursively.
As $Z$ is halved in each step, the recursion terminates after $\Oh(\log n)$ steps.

\begin{proof}[Proof of \cref{thm:cut}]
    We describe a recursive algorithm, and prove its correctness by induction on $|Z|$.
    If $\dist_G(A,B \mid Z) = \infty$, then the algorithm returns $\cI = \{(G',L')\}$, which clearly satisfies conditions~\ref{it:cut:equiv} and~\ref{it:cut:dist}.
    Otherwise, $A$ and $B$ are nonempty, and $Z \neq \emptyset$, as $A$ and $B$ are anticomplete.
    By \cref{cor:dist5}, in time $n^{\Oh(1)} \cdot 2^{6kc}$ we compute a family $\cJ$ of at most $2^{6kc}$ subinstances of $(G',L')$, which is equivalent to $(G',L')$, such that $\dist_G(A,B \mid Z \cap V(G_2)) \geq 5$ for every $(G_2,L_2) \in \cJ$.

    Consider $(G_2,L_2) \in \cJ$.
    We apply \cref{lem:halving} to $A,B,Z \cap V(G_2)$, and obtain sets $A',B',Z'$.
    By \cref{lem:halving}, these sets are pairwise disjoint, $Z' \subseteq V(G_2)$, $G[A' \cup B']$ has at most $\cc(G[A \cup B]) \leq c$ components, $\dist_G(A',B' \mid Z') \geq 3$, and $|Z'| \leq |Z|/2 < |Z|$.
    As $G_2$ is an induced subgraph of $G$, we can call the algorithm recursively for $(G_2,L_2)$ and $A',B',Z'$.
    By the induction hypothesis, it returns a family $\cI_{(G_2,L_2)}$ of subinstances of $(G_2,L_2)$, which is equivalent to $(G_2,L_2)$, such that $\dist_G(A',B' \mid Z' \cap V(G'')) = \infty$ for every $(G'',L'') \in \cI_{(G_2,L_2)}$.
    The algorithm returns $\cI = \bigcup_{(G_2,L_2) \in \cJ} \cI_{(G_2,L_2)}$.

    By the transitivity of equivalence, $\cI$ is a family of subinstances of $(G',L')$ that is equivalent to $(G',L')$.
    Now consider $(G'',L'') \in \cI_{(G_2,L_2)}$ for some $(G_2,L_2) \in \cJ$, and suppose that there is an $A$-$B$-path in $G$ through $Z \cap V(G'')$.
    As $V(G'') \subseteq V(G_2)$, this path is an $A$-$B$-path through $Z \cap V(G_2)$.
    By \cref{lem:halving}, it contains a subpath that is an $A'$-$B'$-path through $Z'$.
    The internal vertices of this subpath are internal vertices of the original path, so they belong to $Z' \cap V(G'')$.
    Thus the subpath is an $A'$-$B'$-path in $G$ through $Z' \cap V(G'')$, a contradiction.

    \medskip
    \noindent\textbf{Size and running time.}
    Every call makes at most $2^{6kc}$ recursive calls, in each of which $Z$ is at most half as large, and only calls with $Z \neq \emptyset$ make recursive calls.
    As $|Z| \leq n$, the recursion tree has depth at most $\log n + 1$, so it has at most $2^{6kc(\log n + 1)} = (2n)^{6kc}$ leaves and at most $2 \cdot (2n)^{6kc}$ nodes.
    Every instance in $\cI$ originates from a leaf, and each leaf returns at most one instance, so $|\cI| \leq (2n)^{6kc}$.
    Apart from the recursive calls, each call takes time $n^{\Oh(1)} \cdot 2^{6kc}$ by \cref{cor:dist5} and \cref{lem:halving}.
    Thus the total running time is $n^{\Oh(1)} \cdot 2^{6kc} \cdot (2n)^{6kc} = n^{\Oh(1)} \cdot (4n)^{6kc}$.
\end{proof}

\section{Proof of the main theorem}

In this section we prove the main result of the paper.

\begin{theorem}\label{thm:realmain}
    Every instance $(G,L)$ of \lcol{3}, where $G$ has $n \geq 2$ vertices and $\mu(G) < 2^k$ for an integer $k \geq 1$, can be solved in time $n^{\Oh(k^2 \log^3 n)}$.
\end{theorem}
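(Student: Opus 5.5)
The algorithm is recursive on the number of vertices. The plan is to branch $(G,L)$ into $n^{\Oh(k^2\log^2 n)}$ subinstances, equivalent to $(G,L)$, each of whose components has at most $n/2$ vertices, and then recurse on the components independently. The recursion depth is $\log n$, so the total running time is
\[
\left(n^{\Oh(k^2\log^2 n)}\right)^{\log n}\cdot n^{\Oh(1)} = n^{\Oh(k^2\log^3 n)}.
\]
A yes-answer for an instance is the conjunction over its components, and the branching is an equivalent family by transitivity. Throughout, $\mu$ of any induced subgraph stays below $2^k$, so all results apply to subinstances; $G$ itself is kept as the ambient graph for \cref{thm:cut}.

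\textbf{Step 1: collecting dominated balanced separators.} We first reduce the instance. While the current instance $(G_1,L_1)$ has a component with more than $n/2$ vertices, we apply \cref{lem:dbs} to $G_1$. (A balanced separator of $G_1$ with respect to $|V(G_1)|\le n$ is good enough, since we only need components of size at most $n/2$; one can instead apply it to the big component.) This gives a set $D_i$ of at most $2k$ vertices. We guess the colouring of $D_i$ ($3^{2k}$ branches), remove the guessed colours from the lists of $N(D_i)$, and delete $D_i$. After reduction, each surviving neighbour has a list of size exactly $2$, since lists of size at most $1$ are eliminated by the reduction rules.

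To bound the number of rounds, we use the potential $\sum_v |L(v)|$ restricted to the big component, or better the following argument. Let $S_i = N(D_i)$ within the current graph. A vertex lying in $S_i$ for several $i$ loses a colour each time it is hit in a new way; since a vertex with two colours that is adjacent to a coloured vertex of one of its two colours is forced to one colour and removed, one should be able to show that after $h = \Oh(\log n)$ rounds, no vertex belongs to all $S_i$. I would instead aim for the statement in the overview directly: choose the rounds so that the $D_i$ are pairwise anticomplete and no vertex is adjacent to all of them. I expect a counting argument to give $h = \Oh(\log n)$ rounds: each round either halves the big component or shrinks the potential. The total branching of this step is $3^{2kh} = n^{\Oh(k\log n)}$ when $k \ge 1$.

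\textbf{Step 2: uncrossing via \cref{thm:cut}.} For every pair $i\neq j$, set $A=D_i$, $B=D_j$, and $Z = V(G_1)\setminus (S_i\cap S_j)$ (together with $D_i\cup D_j$ removed). Then $G[A\cup B]$ has $c\le 4k$ components. We have $\dist_G(A,B\mid Z)\ge 3$, because $D_i,D_j$ are anticomplete and their common neighbours lie in $S_i\cap S_j$, which is excluded from $Z$. \cref{thm:cut} branches into $(2n)^{24k^2}$ subinstances in which no $D_i$-$D_j$-path avoids $S_i\cap S_j$. Consequently $S_i\setminus S_j$ and $S_j\setminus S_i$ lie in different components of $G'' - (S_i\cap S_j)$, i.e.\ the sets are non-crossing in $G''$. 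Here $S_i$ is interpreted as $N[D_i]\cap V(G'')$ plus $D_i$ if needed; I would carry the $D_i$ as ghost vertices in $G$, exactly as \cref{thm:cut} allows.

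Iterating over all $\Oh(\log^2 n)$ pairs gives $(2n)^{\Oh(k^2\log^2 n)}$ branches. Deletions in later pairs only shrink $Z$, so earlier pairs stay separated. Each $S_i$ (restricted to the current graph) is still a $\frac12$-balanced separator w.r.t.\ $n$, and no vertex lies in all of them. Then \cref{thm:noncrossing} (with sizes measured against $n$, which is a harmless adaptation) shows every component has at most $n/2$ vertices.

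\textbf{Main obstacle.} The delicate step is Step 1: producing $\Oh(\log n)$ pairwise anticomplete $D_i$'s with no vertex common to all neighbourhoods, while keeping each $N[D_i]$ balanced in the final subinstance. I would try the potential argument sketched in the overview. Vertices in $S_i$ have lists of size $2$, and a vertex in two different $S_i$ is adjacent to two deleted coloured vertices; after reduction it must therefore have lost colours, so a vertex can be in $S_i$ for at most a bounded number of indices unless the pattern of guessed colours repeats. To rule out persistent common vertices, I would choose each new $D$ with respect to the subgraph of vertices still lying in all previous $S_i$ weighted appropriately, halving the common intersection each round; this gives $\log n$ rounds.
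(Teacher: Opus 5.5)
Your outer recursion and its running-time analysis, as well as Step~2, match the paper. Step~2 applies \cref{thm:cut} with $A=D_i$, $B=D_j$, $c=4k$ and $Z=V(G_1)\setminus(S_i\cap S_j)$, notes that later deletions only shrink $Z$, and rescales $\delta$ in \cref{thm:noncrossing}. The gap is Step~1, which is the heart of \cref{lem:branching}; you leave it as an expectation rather than an argument, and the sketched mechanisms do not work.

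\begin{itemize}
\item \emph{No vertex is prevented from lying in every $S_i$.} Take a vertex $v$ with $L(v)=\{1,2\}$ whose neighbor in each new center is guessed to have color $3$. It never loses a color, so it can lie in every $S_i$ for as many rounds as you like. This ``repeated pattern'' is the generic bad case, not an exception. Without controlling it, you also have no bound on the number of rounds.
\item \emph{Nothing makes the $D_i$ pairwise anticomplete.} \cref{lem:dbs} may return a set containing neighbors of earlier centers. Then $\dist_G(D_i,D_j\mid Z)\geq 3$ fails and \cref{thm:cut} cannot be applied.
\item \emph{Your proposed remedy breaks balance.} Choosing $D$ with respect to the common intersection of the earlier $S_i$ gives up the property that $N[D]$ is a $\frac12$-balanced separator of the whole current graph. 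That property is exactly what \cref{thm:noncrossing} requires.
\end{itemize}

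The missing idea is an exponential potential combined with branching on heavy vertices.
\begin{itemize}
\item \emph{Setup.} Let $\m(v)$ be the number of centers containing a neighbor of $v$, let $\w(v)=2^{\m(v)}-1$, and call $v$ heavy if $\w(N[v])\geq n$. Whenever a heavy vertex exists, guess its color instead of adding a center.
\item \emph{Branching on a heavy vertex lowers the potential.} Every vertex of positive weight has a list of size exactly $2$. Hence the sets $X_i=\{u\in N[v] : i\in L(u)\}$ satisfy $\w(X_1)+\w(X_2)+\w(X_3)=2\w(N[v])$. So for all but at most one color $i$, assigning $i$ to $v$ deletes all positive-weight vertices of $X_i$ and lowers the potential by at least $n/2$.
\item \emph{Adding a center raises it only a little.} When no vertex is heavy, adding a center raises the potential by less than $(2k+1)n$.
\item \emph{Non-heaviness controls adjacency to centers.} It gives $2^{\m(v)}\leq n<2^h$, so no vertex is adjacent to $h$ centers. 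Consequently each new center is anticomplete to all but at most $2k(h-1)$ earlier ones. So after at most $M=2kh^2$ centers, a greedy selection from the largest index downward yields $h$ pairwise anticomplete centers.
\item \emph{Counting branches.} Count only the non-default branches; there are fewer than $2(2k+1)M$ on any root-to-leaf path. This bounds the search tree by $n^{\Oh(k^2h^2)}$.
\end{itemize}

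So Step~1 may create $\Oh(k\log^2 n)$ centers rather than $\Oh(\log n)$, and its branching is $n^{\Oh(k^2\log^2 n)}$ rather than $3^{2kh}$. This still fits your final bound, but only once this argument is supplied.
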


Observe that \cref{thm:realmain} implies \cref{thm:main}.
Indeed, if $\cG$ is a tame (resp., quasitame) class of graphs, then every $n$-vertex graph $G \in \cG$ with $n \geq 2$ satisfies $\mu(G) < 2^k$ for some $k = \Oh(\log n)$ (resp., $k = \Oh(\log^a n)$, where $a$ is the constant from the definition of quasitame classes), so by \cref{thm:realmain}, \lcol{3} can be solved on $\cG$ in time $n^{\Oh(\log^5 n)}$ (resp., $n^{\Oh(\log^{2a+3} n)}$), which is quasipolynomial.

The main technical step is the following lemma, which encapsulates the branching scheme.

\begin{lemma}\label{lem:branching}
    Let $(G,L)$ be an instance of \lcol{3}, where $G$ has $n \geq 2$ vertices and $\mu(G) < 2^k$ for an integer $k \geq 1$.
    Then in time $n^{\Oh(k^2 \log^2 n)}$ one can compute a family $\cI$ of subinstances of $(G,L)$ such that:
    \begin{enumerate}
        \item\label{it:branching:equiv} $\cI$ is equivalent to $(G,L)$, and
        \item\label{it:branching:small} for every $(G',L') \in \cI$, every component of $G'$ has at most $n/2$ vertices.
    \end{enumerate}
\end{lemma}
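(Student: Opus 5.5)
We follow the technical overview, working throughout with $G$ as the ambient graph and with subinstances $(G',L')$. We fix the threshold $n/2$ with respect to the original $n$; components of size at most $n/2$ are fine, and balanced separators are taken in $G$, which has fewer than $2^k$ minimal separators.

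\textbf{Phase 1: collecting dominated balanced separators.} We build, by branching, sets $D_1,\ldots,D_h\subseteq V(G)$, each of size at most $2k$ and each with $N_G[D_i]$ a $\frac12$-balanced separator of $G$. Taking all such $D_i$ simultaneously for $G$ itself is impossible, since \cref{lem:dbs} gives only one set; instead we apply \cref{lem:dbs} to $G$ and then modify.

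We maintain a current subinstance $(G',L')$. Given $D_i$, we guess the colouring of $D_i\cap V(G')$ ($3^{2k}$ branches), delete $D_i$, and apply the reduction rules. Every surviving vertex of $N(D_i)$ now has list size at most $2$.

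To bound $h$ we use a potential $\Phi=\sum_v (|L'(v)|-1)$, or rather a per-vertex count. A vertex adjacent to $D_i$ loses a colour for each colour appearing on its $D$-neighbours. Using a vertex adjacent to all $D_i$ would quickly reduce its list, so we require that every new $D_i$ is chosen anticomplete to all previous $D_j$ in the current instance. This is possible since previous $D_j$ are deleted and we may apply \cref{lem:dbs} to $G$ restricted suitably; the separator stays balanced in $G$ because $N_G[D_i]\supseteq N_{G'}[D_i]$.

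The key condition is that no vertex lies in all $S_i=N_G(D_i)$. We enforce this by a standard Gartland--Lokshtanov style argument. We choose among candidate sets so that a vertex in many $S_i$'s has its list shrunk, which after $O(\log n)$ rounds leaves $h=O(\log n)$. I expect this bookkeeping to be delicate, but the outcome is $n^{O(k\log n)}$ branches.

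\textbf{Phase 2: uncrossing.} For every pair $i\ne j$ we want $S_i,S_j$ non-crossing in the final $G''$. We apply \cref{thm:cut} with $A=D_i$, $B=D_j$, $Z=V(G')\setminus(S_i\cap S_j)\setminus(D_i\cup D_j)$. Here $G[A\cup B]$ has at most $c=4k$ components, and $\dist_G(A,B\mid Z)\ge3$ holds because $A,B$ are anticomplete and common neighbours lie in $S_i\cap S_j$, which is excluded from $Z$. The theorem yields $(2n)^{24k^2}$ branches per pair. Over $O(\log^2 n)$ pairs this gives $n^{O(k^2\log^2 n)}$ branches in total, with $D$-to-$D$ paths avoiding $S_i\cap S_j$ destroyed.

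\textbf{Phase 3: the final instance.} We then apply \cref{thm:noncrossing} to the graph $G^*=G[V(G'')\cup\bigcup D_i]$ with the family $\{S_i\cap V(G^*)\}$ (or the closed neighbourhoods). Crossing in $G^*$ requires a component meeting $S_i\setminus S_j$ and $S_j\setminus S_i$, which yields a $D_i$-$D_j$ path avoiding $S_i\cap S_j$, already destroyed. The theorem then bounds the components, and deleting the $D_i$ only shrinks them.

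\textbf{Main obstacle.} The hard points are the balancedness relative to $|V(G^*)|$ versus $n$, and guaranteeing that no vertex lies in all $S_i$ with $h=O(\log n)$. I would attack both via the potential argument, taking balanced separators in the current graph and accepting threshold $n/2$.
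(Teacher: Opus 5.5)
There is a genuine gap, and it sits in Phase~1, which carries most of the work of this lemma. Phase~2 is fine and matches the paper: you apply \cref{thm:cut} to each pair with $A=D_i$, $B=D_j$, $c=4k$, and $Z$ avoiding $S_i\cap S_j$.

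\textbf{Bounding how many centers a vertex touches.} You need $h=\Oh(\log n)$ pairwise anticomplete centers such that no surviving vertex is adjacent to all of them, using quasipolynomially many branches. Your mechanism, ``a vertex in many $S_i$'s has its list shrunk'', does not hold. After its first center, a vertex has a list of size exactly $2$. It keeps that list as long as all its center-neighbours receive the one colour outside its list. So neither $\sum_v(|L'(v)|-1)$ nor any list-size count detects high multiplicity. Guessing the colours of all such vertices may cost $2^{\Omega(n)}$ branches.

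The missing idea is the weighted potential $\w(v)=2^{\m(v)}-1$, where $\m(v)$ is the number of centers adjacent to $v$, combined with a heavy-vertex rule. Before adding a new center, guess the colour of any $v$ with $\w(N_{G'}[v])\ge n$. Every vertex of positive weight has exactly two colours. Hence, for all but one colour $i$, the vertices of $N_{G'}[v]$ whose lists contain $i$ carry at least half of $\w(N_{G'}[v])$, and they are deleted in the branch $v\mapsto i$. So the potential drops by at least $n/2$ in every non-default branch.

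When no vertex is heavy, adding a center $D$ with $|D|\le 2k$ raises the potential by less than $(2k+1)n$. This bounds the number of non-default branches on any root-to-leaf path by $\Oh(k)$ per center. The same rule gives $2^{\m(v)}-1<n$, i.e.\ $\m(v)<h$. That is exactly the ``no vertex in all $S_i$'' condition.

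\textbf{Anticompleteness of the centers.} You claim each new $D_i$ can be chosen anticomplete to earlier centers by applying \cref{lem:dbs} to a ``suitably restricted'' graph. This is unsupported. The small dominators of balanced separators of $G'$ may have to contain neighbours of old centers. Deleting $N_G(\bigcup_j D_j)$ first gives a balanced separator only of the smaller graph, not of $G'$.

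Relatedly, $N_G[D_i]\supseteq N_{G'}[D_i]$ does not make it balanced in $G$, since deleted vertices can reconnect components. You do not need that property, though.

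The paper takes unconstrained centers in the current graph and collects up to $M=2kh^2$ of them. It then uses $\m<h$ to show each center touches at most $2k(h-1)$ earlier ones, so a greedy selection extracts $h$ pairwise anticomplete centers. This is why the branching phase costs $n^{\Oh(k^2h^2)}$ rather than your claimed $n^{\Oh(k\log n)}$.

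\textbf{Phase~3.} Do not add the centers back. In $G^*$, a third center $D_l$ can merge components of $G^*-S_i$, breaking balancedness. It can also create $S_i$--$S_j$ connections through $D_l$, which \cref{thm:cut} never destroyed because $D_l\cap Z=\emptyset$. Instead, apply \cref{thm:noncrossing} in $G''$ itself with $\delta=n/(2|V(G'')|)$. Balancedness then follows because $G''-N_G(D_i)$ is an induced subgraph of $H-N_H[D_i]$, where $H$ is the graph in which $D_i$ was found.
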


Let us first show how \cref{lem:branching} implies \cref{thm:realmain}.

\begin{proof}[Proof of \cref{thm:realmain} assuming \cref{lem:branching}]
    We apply \cref{lem:branching} to $(G,L)$, and for every obtained instance, we recursively solve its restrictions to all components of its graph; instances with one vertex are solved directly.
    Correctness follows from \cref{lem:branching}.
    The graphs of all instances in the recursion are induced subgraphs of $G$, so they have fewer than $2^k$ minimal separators, and the number of vertices halves in each recursive call.
    Thus the recursion has depth $\Oh(\log n)$, and each call takes time $n^{\Oh(k^2 \log^2 n)}$ and makes at most $n^{\Oh(k^2 \log^2 n)}$ recursive calls, which gives the running time $n^{\Oh(k^2 \log^3 n)}$.
\end{proof}

The rest of this section is devoted to the proof of \cref{lem:branching}.
We will repeatedly use the following operation.
Let $(G',L')$ be an instance of \lcol{3}, and let $X \subseteq V(G')$ be nonempty.
\emph{Guessing the coloring of $X$} works as follows.
For every function $\phi$ that assigns to each $x \in X$ a color $\phi(x) \in L'(x)$, we restrict the list of every $x \in X$ to $\{\phi(x)\}$, and exhaustively apply the reduction rules.
We discard every instance to which (R1) is applied, and return the family of all remaining reduced instances.
This family consists of at most $3^{|X|}$ subinstances of $(G',L')$, and it is equivalent to $(G',L')$, as we consider all colorings of $X$ and the reduction rules are correct.
For a single vertex $v$, we write \emph{guessing the color of $v$} instead of guessing the coloring of $\{v\}$.

\begin{proof}[Proof of \cref{lem:branching}]
    Let us start with an overview.
    We repeatedly find a set $D$ of at most $2k$ vertices whose closed neighborhood is a balanced separator, and guess the coloring of $D$; we call $D$ a \emph{center}.
    Afterwards, every remaining neighbor of $D$ has only two colors in its list.
    Once we have $\Oh(\log n)$ pairwise anticomplete centers such that no vertex is adjacent to all of them, their neighborhoods are balanced separators with empty intersection.
    We make them pairwise non-crossing using \cref{thm:cut}, and conclude using \cref{thm:noncrossing}.
    To ensure that no vertex is adjacent to too many centers, we guess the colors of vertices whose closed neighborhoods contain many vertices adjacent to centers.
    In all branches but one, many of them are deleted, which is captured by a potential function.

    \medskip
    \noindent\textbf{The procedure.}
    First, we exhaustively apply the reduction rules to $(G,L)$; if (R1) is applied, we output the empty family, which is equivalent to $(G,L)$.
    Thus we may assume that $(G,L)$ is reduced.
    Let $h$ be the smallest integer such that $2^h > n$; note that $h = \Oh(\log n)$.

    The procedure maintains a reduced subinstance $(G',L')$ of $(G,L)$, initially equal to $(G,L)$, and a sequence $D_1,\ldots,D_t$ of \emph{centers}, initially empty, which are pairwise disjoint subsets of $V(G) \setminus V(G')$.
    For $v \in V(G')$, let $\m(v)$ be the number of centers that contain a neighbor of $v$, and let $\w(v) = 2^{\m(v)}-1$.
    For $X \subseteq V(G')$, let $\w(X) = \sum_{v \in X} \w(v)$; we call $\w(V(G'))$ the \emph{potential} of $(G',L')$.
    A vertex $v \in V(G')$ is \emph{heavy} if $\w(N_{G'}[v]) \geq n$.
    The \emph{greedy selection} repeatedly selects the remaining center with the largest index, and removes it together with all remaining centers that are not anticomplete to it, until no center remains; clearly, the selected centers are pairwise anticomplete.
    In each instance, the procedure applies the first applicable step among the following ones.
    \begin{enumerate}[({S}1)]
        \item If every component of $G'$ has at most $n/2$ vertices, output $(G',L')$.
        \item If some vertex $v \in V(G')$ is heavy, guess the color of $v$, and continue with each obtained instance.
        \item If the greedy selection selects fewer than $h$ centers, find a smallest set $D \subseteq V(G')$ such that $N_{G'}[D]$ is a $\frac{1}{2}$-balanced separator in $G'$, by checking all sets in the order of increasing size.
        Note that $D \neq \emptyset$, as (S1) does not apply.
        Guess the coloring of $D$, and continue with each obtained instance, with $D_{t+1} = D$ as a new center.
        \item Otherwise, apply the finishing step described below to $(G',L')$ and the first $h$ centers selected by the greedy selection, and output the obtained family.
    \end{enumerate}
    Note that the procedure does not use $k$, and neither does the algorithm of \cref{thm:cut}, which we apply in the finishing step.

    \medskip
    \noindent\textbf{Analysis of steps (S1)--(S3).}
    As $\mu(G') \leq \mu(G) < 2^k$, by \cref{lem:dbs}, every center has at most $2k$ vertices.
    Let $M = 2kh^2$.

    \begin{claim}\label{cl:br:lists}
        Every vertex $v \in V(G')$ with $\m(v) \geq 1$ satisfies $|L'(v)| = 2$.
    \end{claim}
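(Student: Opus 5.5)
This is an invariant of the procedure, so I would prove it by induction over the steps, simultaneously with the invariant that $(G',L')$ is reduced. Reducedness holds initially, because we reduced $(G,L)$ first. It is preserved because every step that modifies the instance, namely guessing in (S2) or (S3), ends with exhaustive application of the reduction rules, and instances where (R1) fires are discarded. Since $L'(v) \subseteq [3]$ and a reduced instance has no lists of size $0$ or $1$, every vertex of $G'$ has a list of size $2$ or $3$. So it remains to show that a vertex $v$ with $\m(v) \geq 1$ has lost at least one color.

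Fix $v \in V(G')$ with $\m(v) \geq 1$. Then $v$ has a neighbor $u$ in some center $D_j$, and $u \in V(G) \setminus V(G')$. I would look at the moment $D_j$ was created in (S3). At that moment $D = D_j$ was a subset of the current vertex set, and $v$, being present now, was present then, because vertices are only ever deleted. Guessing the coloring of $D$ restricts the list of $u$ to a single color $\phi(u)$. While the reduction rules are applied, $u$ remains in the graph until (R2) removes it; (R1) cannot fire in a surviving branch. Whenever (R2) removes $u$, its list is exactly $\{\phi(u)\}$, since lists only shrink and the branch survives. At that point $v$ is still present, so $\phi(u)$ is deleted from the list of $v$.

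The list of $v$ therefore misses the color $\phi(u)$ immediately after this step. Lists only shrink afterwards, since all later operations delete colors or vertices. Hence $|L'(v)| \leq 2$ now, and combined with reducedness, $|L'(v)| = 2$.

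The only delicate point is that the edge $uv$ is really seen by (R2): $u$ and $v$ are adjacent in $G$, and (R2) is applied in the current induced subgraph, which contains both $u$ and $v$ at the moment $u$ is removed. I therefore expect no real obstacle beyond bookkeeping about when each vertex is deleted.
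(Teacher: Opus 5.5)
Your proof is correct and follows the same route as the paper: a neighbor $u$ of $v$ in a center had its list fixed to one color when that center's coloring was guessed. Then (R2) deleted $u$ and removed that color from the list of $v$, lists never grow afterwards, and reducedness gives $|L'(v)| \geq 2$. You additionally spell out bookkeeping the paper leaves implicit, namely that reducedness is maintained and that $v$ is still present when (R2) deletes $u$.
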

    \begin{claimproof}
        Let $x$ be a neighbor of $v$ in some center.
        When we guessed the coloring of this center, the vertex $x$ was deleted by (R2), and thus its color was removed from the list of $v$.
        As lists never grow, we have $|L'(v)| \leq 2$, and $|L'(v)| \geq 2$, as $(G',L')$ is reduced.
    \end{claimproof}

    \begin{claim}\label{cl:br:heavy}
        Let $v \in V(G')$ be heavy.
        For every color $i \in L'(v)$ except at most one, the instance obtained for $i$ when guessing the color of $v$, if not discarded, has potential at most $\w(V(G')) - n/2$.
    \end{claim}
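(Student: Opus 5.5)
Fix a color $i \in L'(v)$ and look at what guessing color $i$ for $v$ does. When $v$ gets list $\{i\}$, rule (R2) deletes $v$ and removes $i$ from the lists of all neighbors $u \in N_{G'}(v)$. Now take a neighbor $u$ with $\m(u) \geq 1$. By \cref{cl:br:lists}, $|L'(u)| = 2$. So if $i \in L'(u)$, the list of $u$ drops to a single color, and $u$ is deleted by a further application of (R2), unless (R1) fires and the instance is discarded. Also, $v$ itself is deleted. Hence every vertex of $N_{G'}[v]$ that has positive weight and either is $v$ or contains $i$ in its list disappears. Deleting vertices only removes vertices, centers are unchanged, and $\m$ can only stay the same on surviving vertices. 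So the potential of the new instance is at most $\w(V(G'))$ minus the total weight of these vanished vertices.

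Define $W_i = \w(v) + \sum \{\w(u) : u \in N_{G'}(v),\ i \in L'(u)\}$. Then the instance obtained for $i$ has potential at most $\w(V(G')) - W_i$. Vertices with $\m(u) = 0$ have weight $0$, so they do not matter. Every neighbor $u$ with $\w(u) > 0$ has a list of size exactly $2$, so it contains $i$ for at least two of the three colors $i \in [3]$. The color $i$ it misses is the unique color not in $L'(u)$.

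Now fix two distinct colors $i,j \in L'(v)$. Every positive-weight neighbor $u$ has $i \in L'(u)$ or $j \in L'(u)$, since $|L'(u)| = 2$ leaves only one color out. Hence $W_i + W_j \geq \w(N_{G'}[v]) \geq n$ by heaviness. Therefore at most one color $i \in L'(v)$ can have $W_i < n/2$: if two colors both had $W < n/2$, their sum would be below $n$, a contradiction. Every other color yields potential at most $\w(V(G')) - n/2$, or the instance is discarded, which is exactly the claim.

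The only delicate point is the cascade of (R2). Further reductions may also delete more vertices or change lists, but they never increase the potential: weights of surviving vertices are unchanged because the centers are fixed, and vertex deletion only lowers the sum. So the bound $\w(V(G')) - W_i$ survives exhaustive reduction. I expect no real obstacle beyond checking that lists of size $2$ (\cref{cl:br:lists}) are the invariant that makes the covering argument work.
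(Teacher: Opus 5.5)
Your proof is correct and follows essentially the same route as the paper. Both use \cref{cl:br:lists} to see that each positive-weight vertex of $N_{G'}[v]$ misses only one color, so in the branch for $i$ every positive-weight vertex of $N_{G'}[v]$ containing $i$ is deleted while the remaining weights are unchanged; the only cosmetic difference is that you conclude via the pairwise bound $W_i + W_j \geq \w(N_{G'}[v])$, whereas the paper uses the identity $\w(X_1)+\w(X_2)+\w(X_3) = 2\w(N_{G'}[v])$ together with $\w(X_i) \leq \w(N_{G'}[v])$.
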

    \begin{claimproof}
        For $i \in [3]$, let $X_i = \{u \in N_{G'}[v] \mid i \in L'(u)\}$.
        By \cref{cl:br:lists}, every vertex $u$ with $\w(u) > 0$ has exactly two colors in its list, so $\w(X_1) + \w(X_2) + \w(X_3) = 2\w(N_{G'}[v])$.
        As $\w(X_i) \leq \w(N_{G'}[v])$ for every $i$, at most one color $i$ satisfies $\w(X_i) < \w(N_{G'}[v])/2$.
        Consider any other color $i \in L'(v)$, so that $\w(X_i) \geq \w(N_{G'}[v])/2 \geq n/2$, and suppose that the instance $(G'',L'')$ obtained for $i$ is not discarded.
        Every vertex $u \in X_i$ with $\w(u) > 0$ is deleted in $(G'',L'')$: either $u = v$, or $u$ is a neighbor of $v$, so it loses the color $i$ and is left with at most one color, while $(G'',L'')$ is reduced.
        As the weights of the remaining vertices do not change, the potential decreases by at least $\w(X_i) \geq n/2$.
    \end{claimproof}

    \begin{claim}\label{cl:br:mult}
        If no vertex of $G'$ is heavy, then $\m(v) < h$ for every $v \in V(G')$.
    \end{claim}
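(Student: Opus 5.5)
Suppose, towards a contradiction, that $\m(v) \geq h$ for some $v \in V(G')$. We aim to show that then $v$ is heavy, i.e., $\w(N_{G'}[v]) \geq n$. The point is that the weight function is exponential in $\m$: a vertex $u$ has weight $\w(u) = 2^{\m(u)} - 1$.

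We will use the choice of $h$ as the smallest integer with $2^h > n$. If $\m(v) \geq h$, then
\[
\w(v) = 2^{\m(v)} - 1 \geq 2^h - 1 \geq n,
\]
where the last inequality holds because $2^h > n$ and both sides are integers, so $2^h \geq n+1$. Since weights are nonnegative and $v \in N_{G'}[v]$, we get $\w(N_{G'}[v]) \geq \w(v) \geq n$. Hence $v$ is heavy, contradicting the assumption that no vertex of $G'$ is heavy.

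There is no real obstacle here. The only detail to check is the integer rounding $2^h - 1 \geq n$, which follows from the minimality and definition of $h$. We never need neighbors of $v$ at all: the claim follows from the weight of $v$ alone.
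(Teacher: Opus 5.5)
Your proof is correct and is essentially the paper's argument. The paper runs the same chain directly, $2^{\m(v)}-1=\w(v)\leq \w(N_{G'}[v])<n$, so $2^{\m(v)}\leq n<2^h$. You run it contrapositively via $\w(v)\geq 2^h-1\geq n$, and you likewise need only the weight of $v$ itself.
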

    \begin{claimproof}
        We have $2^{\m(v)} - 1 = \w(v) \leq \w(N_{G'}[v]) < n$, so $2^{\m(v)} \leq n < 2^h$.
    \end{claimproof}

    \begin{claim}\label{cl:br:greedy}
        If $t \geq M$, then the greedy selection selects at least $h$ centers.
        Consequently, we always have $t \leq M$.
    \end{claim}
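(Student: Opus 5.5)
We need to show: if $t \geq M = 2kh^2$, the greedy selection selects at least $h$ centers; and consequently $t \le M$ always holds. The plan is to bound how many centers a single selected center can eliminate in the greedy procedure. A new center is only added in step (S3), and at that moment (S2) did not apply, so no vertex of $G'$ was heavy. By \cref{cl:br:mult}, every $v \in V(G')$ then had $\m(v) < h$, i.e., $v$ was adjacent to fewer than $h$ centers among $D_1,\ldots,D_t$. The new center $D_{t+1} \subseteq V(G')$ has at most $2k$ vertices by \cref{lem:dbs}. Hence $D_{t+1}$ is non-anticomplete to at most $2k(h-1)$ earlier centers. Centers are disjoint from $V(G')$ afterwards and later centers come from smaller $V(G')$, so adjacency between $D_j$ and $D_i$ with $i<j$ is governed by the moment $D_j$ was created. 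Thus each center $D_j$ is not anticomplete to fewer than $2kh$ centers of smaller index.

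Now I analyze the greedy selection. It picks the remaining center with the largest index $j$ and removes it together with all remaining centers not anticomplete to it. All remaining centers have index smaller than $j$, since $j$ is the maximum. By the bound above, each selection removes at most $1 + 2k(h-1) \le 2kh$ centers. If $t \ge M = 2kh^2$, then after selecting $s<h$ centers at most $2kh\cdot s < 2kh^2 \le t$ centers are removed. So some center remains and the selection continues, and at least $h$ centers are selected.

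For the consequence, note that $t$ grows only in (S3), which requires greedy to select fewer than $h$ centers. By the first part this forces $t < M$ before the step, so $t \le M$ afterwards. Since $t$ starts at $0$, induction gives $t \le M$ always.

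The main point to check carefully is the ordering argument: a center $D_j$ could also be adjacent to later centers $D_i$ with $i>j$. The claim is that this never matters, because when $D_j$ is selected all higher-indexed centers have already been removed or selected. Only edges to lower indices count, and these are exactly the ones bounded using non-heaviness at creation time.
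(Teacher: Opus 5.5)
Your proof is correct and follows the paper's argument. Non-heaviness at the moment $D_s$ is created, combined with \cref{cl:br:mult} and $|D_s|\le 2k$, means $D_s$ fails to be anticomplete to at most $2k(h-1)$ lower-indexed centers, so each greedy round removes at most $2kh$ centers and (S3) can only fire while $t<M$. You also spell out why only lower-indexed adjacencies matter: the greedy selection always picks the largest remaining index, a point the paper leaves implicit.
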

    \begin{claimproof}
        Consider $s \in [t]$.
        When $D_s$ was found in (S3), it was contained in the vertex set of the current instance, and no vertex of this instance was heavy.
        Thus, by \cref{cl:br:mult}, every vertex of $D_s$ has a neighbor in fewer than $h$ of the centers $D_1,\ldots,D_{s-1}$, so $D_s$ is anticomplete to all but at most $2k(h-1)$ of them.
        Hence, each round of the greedy selection removes at most $2k(h-1)+1 \leq 2kh$ centers, so there are at least $t/(2kh)$ rounds, which is at least $h$ if $t \geq M$.
        In particular, (S3) is not applied if $t \geq M$, so we always have $t \leq M$.
    \end{claimproof}

    \begin{claim}\label{cl:br:increase}
        Every execution of (S3) increases the potential by less than $(2k+1)n$.
    \end{claim}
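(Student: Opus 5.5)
We need to bound the increase in potential when we add a new center $D = D_{t+1}$ with $|D| \le 2k$ and guess its coloring. The plan is to track each surviving vertex's weight separately and sum.

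First, fix a surviving instance $(G'',L'')$ obtained from guessing the coloring of $D$. Since it is a subinstance, $V(G'') \subseteq V(G') \setminus D$. For each $v \in V(G'')$, the value $\m(v)$ increases by at most one: it increases exactly when $v$ has a neighbor in $D$, and otherwise stays the same, because the old centers are unchanged. Vertices of $G'$ that are deleted only lose weight, since weights are nonnegative.

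So for $v \in V(G'')$ adjacent to $D$, the new weight is
\[
2^{\m(v)+1}-1 = 2\w(v)+1,
\]
and the increase is $\w(v)+1$. All other surviving vertices have unchanged weight. The total increase is therefore at most
\[
\sum_{v \in N_{G'}(D)} (\w(v)+1) \le \w(N_{G'}(D)) + |N_{G'}(D)|.
\]

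To bound $\w(N_{G'}(D))$, use that (S3) is applied only when (S2) is not, so no vertex of $G'$ is heavy. Hence $\w(N_{G'}[x]) < n$ for every $x \in D$. Taking the union over $x \in D$ gives
\[
\w(N_{G'}(D)) \le \sum_{x \in D} \w(N_{G'}[x]) < |D|\, n \le 2kn,
\]
using $|D| \le 2k$ from \cref{lem:dbs}, since $D$ is a smallest such set and $\mu(G') < 2^k$. Also, $|N_{G'}(D)| \le n - |D| < n$ because $D \neq \emptyset$. Together the increase is strictly less than $2kn + n = (2k+1)n$.

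The main care point is the bound on $|N_{G'}(D)|$. Its strictness is not really needed, since strictness already comes from the first sum, given that $D$ is nonempty and each term is $< n$. One must also check that no surviving vertex gains more than one in $\m$: each vertex gains at most one, since only one new center is added.
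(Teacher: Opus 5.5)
Your proposal is correct and follows essentially the same argument as the paper. You bound the increase by $\sum(\w(v)+1)$ over surviving neighbors of $D$, then use that no vertex is heavy and $|D|\le 2k$ to get $\sum_{x\in D}\w(N_{G'}[x])<2kn$, plus at most $n$ for the $+1$ terms. The only difference is cosmetic: you sum over $N_{G'}(D)$ instead of $N_{G'}[D]$, which is valid because every vertex of $D$ is removed by (R2) in any non-discarded instance.
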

    \begin{claimproof}
        Let $D$ be the center found in (S3) for $(G',L')$, and let $(G'',L'')$ be an obtained instance.
        Compared to $(G',L')$, some vertices are deleted, and for every $v \in V(G'')$ that has a neighbor in $D$, the weight $\w(v)$ changes to $2\w(v)+1$; all such vertices $v$ belong to $N_{G'}[D]$.
        As no vertex of $G'$ is heavy, the potential increases by at most
        \[
            \sum_{v \in N_{G'}[D]} (\w(v)+1) \leq |V(G')| + \sum_{x \in D} \w(N_{G'}[x]) < n + 2kn.
        \]
    \end{claimproof}

    \begin{claim}\label{cl:br:tree}
        Steps (S1)--(S3) produce at most $n^{\Oh(k^2h^2)}$ instances, in total time $n^{\Oh(k^2h^2)}$.
    \end{claim}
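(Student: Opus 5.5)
We view the execution of steps (S1)--(S3) as a branching tree: each node is a reduced subinstance $(G',L')$ together with its centers $D_1,\ldots,D_t$. Every node either is a leaf (where (S1) or (S4) applies, or every instance is discarded) or branches via (S2) or (S3). We bound the number of leaves by bounding, along any root-to-leaf path, the number of (S3) steps and the number of (S2) steps. We split the (S2) steps into \emph{good} ones, where the potential drops by at least $n/2$, and \emph{bad} ones, namely the at most one exceptional color per heavy vertex from \cref{cl:br:heavy}.

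The key steps, in order, are as follows.

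First, by \cref{cl:br:greedy}, at most $M = 2kh^2$ executions of (S3) occur on any root-to-leaf path, since each adds a center and $t \le M$.

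Second, the potential is initially $0$ (no centers) and is always nonnegative. By \cref{cl:br:increase}, it increases only during (S3), by less than $(2k+1)n$ each time. So the total increase along a path is less than $M(2k+1)n$. Since each good (S2) step decreases the potential by at least $n/2$, and (S2) never increases it because deletions only lower weights, the number of good (S2) steps on a path is at most $2M(2k+1) = \Oh(k^2h^2)$.

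Third, bad (S2) steps need not shrink the potential, but each deletes the heavy vertex $v$, because its list becomes a singleton and (R2) removes it. Hence every (S2) or (S3) step deletes at least one vertex, and any path has at most $n$ branching steps in total, so the tree is finite.

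Fourth, we count leaves. Encode a root-to-leaf path by the sequence of its branching decisions. An (S3) step has at most $3^{2k}$ children. An (S2) step has at most $3$ children, of which all but at most one are good. A path is then determined by three pieces of data:
\begin{itemize}
\item the positions among at most $n$ steps at which an (S3) or good (S2) step occurs, giving at most $n^{\Oh(k^2h^2)}$ choices since there are $\Oh(k^2h^2)$ such steps;
\item the choices made at those steps, giving $3^{2k\cdot M}\cdot 3^{\Oh(k^2h^2)}$ options;
\item at every other (S2) position, the forced bad child.
\end{itemize}
Hence the number of leaves, and thus of produced instances, is $n^{\Oh(k^2h^2)}\cdot 2^{\Oh(k^2h^2)} = n^{\Oh(k^2h^2)}$.

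Fifth, the work per node is polynomial, except for the search for $D$ in (S3). By \cref{lem:dbs}, $|D| \le 2k$, so trying all sets by increasing size costs $n^{\Oh(k)}$. The depth is at most $n$, so the total number of nodes is at most $n$ times the number of leaves. The total time is therefore $n^{\Oh(k^2h^2)}$.

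The main obstacle is the fourth step: bad branches are not controlled by the potential. I would handle them through the encoding argument, arguing that bad steps contribute no multiplicative factor because the child they lead to is uniquely determined. If this is delicate, the fallback is to charge path length to vertex deletions. This bounds the number of bad steps per path only by $n$, but the bad steps still add no branching, since the choice at each of them is forced.
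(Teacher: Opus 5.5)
Your proposal is correct and follows essentially the same argument as the paper. Your good/bad (S2) branches are exactly the paper's marked/default edges, and the rest matches: at most $M$ (S3) steps per path, fewer than $2(2k+1)M$ good (S2) steps by the potential, encoding a root-to-leaf path by the positions of, and choices at, its non-default edges, and $n^{\Oh(k)}$ time per node. One minor nit: the number of nodes is at most $n+1$ (not $n$) times the number of leaves, which does not affect the bound.
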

    \begin{claimproof}
        Consider the search tree whose nodes are the instances produced by the procedure, where the children of an instance are the instances obtained from it by guessing in (S2) or (S3).
        Every guess deletes at least one vertex, so every root-to-leaf path has at most $n$ edges.
        For each guess of the color of a vertex $v$ in (S2), we call the edge to the child obtained for the exceptional color from \cref{cl:br:heavy} \emph{default} (if there is one); all other edges of the tree are \emph{marked}.

        Consider a root-to-leaf path.
        By \cref{cl:br:greedy}, it contains at most $M$ edges corresponding to (S3), each leading to one of at most $3^{2k}$ children.
        By \cref{cl:br:heavy}, each marked edge corresponding to (S2) decreases the potential by at least $n/2$.
        The potential is initially $0$ and never negative, and by \cref{cl:br:increase}, it increases only in (S3), by less than $(2k+1)n$ each time.
        Hence, the path contains fewer than $2(2k+1)M$ marked edges corresponding to (S2), each leading to one of at most $3$ children.
        A root-to-leaf path is determined by its length, the positions of its marked edges, and the children chosen at its marked edges.
        Thus there are at most $(n+1)^{(4k+3)M+1} \cdot 3^{2kM} \cdot 3^{(4k+2)M} = n^{\Oh(kM)} = n^{\Oh(k^2h^2)}$ root-to-leaf paths, and the tree has at most $n+1$ times as many nodes.
        Each node is processed in time $n^{2k+\Oh(1)}$, as every center has at most $2k$ vertices, which proves the claim.
    \end{claimproof}

    \medskip
    \noindent\textbf{The finishing step.}
    Consider an instance $(G',L')$ to which (S4) is applied, and let $C_1,\ldots,C_h$ be the first $h$ centers selected by the greedy selection; recall that they are pairwise anticomplete.

    \begin{claim}\label{cl:br:balanced}
        Let $G''$ be an induced subgraph of $G'$.
        For every $j \in [h]$, every component of $G'' - N_G(C_j)$ has at most $n/2$ vertices.
        Moreover, no vertex of $G''$ belongs to all the sets $N_G(C_j)$.
    \end{claim}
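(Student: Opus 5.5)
The claim has two parts, and I will prove each separately.

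\textbf{Balance.} The key observation is that each center $C_j$ was chosen, in some ancestor instance $(G_0,L_0)$, as a set $D \subseteq V(G_0)$ with $N_{G_0}[D]$ a $\frac12$-balanced separator of $G_0$. So every component of $G_0 - N_{G_0}[D]$ has at most $|V(G_0)|/2 \leq n/2$ vertices. Lists only shrink and vertices are only deleted along the branch, so the current $G'$, and hence $G''$, is an induced subgraph of $G_0$. Centers are deleted right after their coloring is guessed, so $V(G'') \cap C_j = \emptyset$. Therefore $G'' - N_G(C_j) = G'' - N_G[C_j]$.

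Since $G_0$ is an induced subgraph of $G$, we have $N_G[C_j] \cap V(G_0) = N_{G_0}[C_j]$. Hence $G'' - N_G(C_j)$ is an induced subgraph of $G_0 - N_{G_0}[C_j]$. Each of its components lies inside a component of the latter, so it has at most $n/2$ vertices.

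\textbf{Empty intersection.} Suppose some $v \in V(G'')$ lies in $N_G(C_j)$ for all $j \in [h]$. Then $v \in V(G')$, and $v$ has a neighbor in each of the $h$ distinct centers $C_1, \ldots, C_h$, so $\m(v) \geq h$. Step (S4) is only applied when (S2) does not apply, so no vertex of $G'$ is heavy. By \cref{cl:br:mult}, $\m(v) < h$, a contradiction.

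The only point needing care is the first part: checking that $G''$ really sits inside the graph $G_0$ from the moment the center was found, and that the neighborhood in $G$ restricts correctly to that graph. Both hold because every graph in the process is an induced subgraph of its predecessor.
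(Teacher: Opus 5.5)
Your proof is correct and follows essentially the same argument as the paper. Like the paper, you pass to the graph of the instance in which $C_j$ was found, note that $G'' - N_G(C_j)$ is an induced subgraph of that graph minus the closed neighborhood of $C_j$, and get the second part from \cref{cl:br:mult} because no vertex of $G'$ is heavy when (S4) is applied.
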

    \begin{claimproof}
        Fix $j$, and let $H$ be the graph of the instance in which $C_j$ was found in (S3).
        Then $N_H[C_j]$ is a $\frac{1}{2}$-balanced separator in $H$, and $V(G'') \subseteq V(G') \subseteq V(H) \setminus C_j$.
        Thus $G'' - N_G(C_j)$ is an induced subgraph of $H - N_H[C_j]$, so each of its components has at most $|V(H)|/2 \leq n/2$ vertices.
        Finally, a vertex in all the sets $N_G(C_j)$ would contradict \cref{cl:br:mult}, as no vertex of $G'$ is heavy.
    \end{claimproof}

    Now we make these separators pairwise non-crossing.
    We start with the family $\{(G',L')\}$, and consider the pairs $(j,j')$ with $1 \leq j < j' \leq h$ one by one.
    For the pair $(j,j')$, we replace each instance $(G_1,L_1)$ of the current family by the family obtained by applying \cref{thm:cut} to $G$, $(G_1,L_1)$, $A = C_j$, $B = C_{j'}$, $Z = V(G_1) \setminus (N_G(C_j) \cap N_G(C_{j'}))$, and $c = 4k$.
    The assumptions of \cref{thm:cut} are satisfied:
    the sets $C_j$, $C_{j'}$, and $Z$ are pairwise disjoint, as the centers are disjoint and consist of deleted vertices;
    the graph $G[C_j \cup C_{j'}]$ has at most $|C_j| + |C_{j'}| \leq 4k$ components;
    and $\dist_G(C_j,C_{j'} \mid Z) \geq 3$, as $C_j$ and $C_{j'}$ are anticomplete and no vertex of $Z$ has neighbors in both of them.
    Let $\cI'$ be the family obtained after considering all pairs.
    By the transitivity of equivalence, $\cI'$ is equivalent to $(G',L')$, and by \cref{thm:cut}, $\cI'$ consists of at most $(2n)^{24k^2h^2}$ instances and is computed in time $n^{\Oh(k^2h^2)}$.

    \begin{claim}\label{cl:br:noncross}
        Let $(G'',L'') \in \cI'$, and let $S_j = N_G(C_j) \cap V(G'')$ for $j \in [h]$.
        Then the sets $S_1,\ldots,S_h$ are pairwise non-crossing in $G''$.
    \end{claim}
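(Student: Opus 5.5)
Fix $j<j'$ in $[h]$ and suppose, for contradiction, that $S_j$ and $S_{j'}$ cross in $G''$. Then some component $K$ of $G''-(S_j\cap S_{j'})$ contains a vertex $x\in S_j\setminus S_{j'}$ and a vertex $y\in S_{j'}\setminus S_j$. The plan is to turn the $x$-$y$ path inside $K$ into a $C_j$-$C_{j'}$-path in $G$ whose internal vertices lie in $Z\cap V(G'')$ for the application of \cref{thm:cut} to the pair $(j,j')$, which that theorem forbids.

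\textbf{Step 1: a path between the centers.} Take a shortest path $P$ from $S_j$ to $S_{j'}$ inside $K$. Its endpoints are $x'\in S_j$ and $y'\in S_{j'}$, and none of its vertices lies in $S_j\cap S_{j'}$. By minimality, $x'$ is the only vertex of $P$ in $S_j$ and $y'$ the only one in $S_{j'}$. Choose $a\in C_j$ adjacent to $x'$ and $b\in C_{j'}$ adjacent to $y'$, which exist since $x'\in N_G(C_j)$ and $y'\in N_G(C_{j'})$. Then $aPb$ is a walk in $G$. Every vertex of $P$ is in $V(G'')$ and outside $N_G(C_j)\cap N_G(C_{j'})$. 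Also $C_j$ and $C_{j'}$ are disjoint from $V(G'')\subseteq V(G')$, because centers consist of deleted vertices. Hence $aPb$ is a genuine path, and its internal vertices belong to $V(G'')\setminus(N_G(C_j)\cap N_G(C_{j'}))$.

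\textbf{Step 2: the interior lies in $Z\cap V(G'')$.} Let $(G_1,L_1)$ be the instance to which \cref{thm:cut} was applied for the pair $(j,j')$, with $Z=V(G_1)\setminus(N_G(C_j)\cap N_G(C_{j'}))$. The families are refined pair by pair, so $(G'',L'')$ is a subinstance of some instance $(G_2,L_2)$ returned by that application, and $G_2$ is a subinstance of $G_1$. Thus $V(G'')\subseteq V(G_2)\subseteq V(G_1)$, and the interior of $aPb$ lies in $Z\cap V(G'')\subseteq Z\cap V(G_2)$. So $aPb$ is a $C_j$-$C_{j'}$-path through $Z\cap V(G_2)$. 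Item~\ref{it:cut:dist} of \cref{thm:cut} says that $\dist_G(C_j,C_{j'}\mid Z\cap V(G_2))=\infty$, so no such path exists, a contradiction.

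\textbf{Main obstacle.} The subtle point is checking that $aPb$ really is an $A$-$B$-path in the sense of the definition: it must be a path, and its internal vertices must avoid $A\cup B$. Both follow from the centers being disjoint from $V(G'')$. We also need the path to have at least one internal vertex, which holds because $C_j$ and $C_{j'}$ are anticomplete. A second point is the monotonicity of later steps: every subsequent application of \cref{thm:cut} only passes to subinstances, so vertex sets only shrink and the conclusion for $(j,j')$ persists down to $(G'',L'')$. Finally, the crossing definition uses components of $G''-(S_j\cap S_{j'})$, and $S_j\cap S_{j'}$ equals $N_G(C_j)\cap N_G(C_{j'})\cap V(G'')$, which is exactly the set excluded from $Z$. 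This is what makes the path $P$ avoid the excluded set.
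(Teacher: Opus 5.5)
Your proof is correct and is essentially the paper's argument: you extend a path inside the crossing component by a neighbor in $C_j$ and a neighbor in $C_{j'}$, obtaining a $C_j$-$C_{j'}$-path through $Z \cap V(G_2)$, which contradicts item~\ref{it:cut:dist} of \cref{thm:cut}. Choosing a shortest path is harmless but unnecessary: every vertex of any $x$-$y$ path in that component already lies in $Z \cap V(G_2)$, and that is all the definition requires.
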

    \begin{claimproof}
        Suppose that for some $j < j'$, there is a path $P$ in $G'' - (S_j \cap S_{j'})$ from a vertex $x \in S_j \setminus S_{j'}$ to a vertex $y \in S_{j'} \setminus S_j$.
        When considering the pair $(j,j')$, we applied \cref{thm:cut} to some instance $(G_1,L_1)$, and obtained an instance $(G_2,L_2)$ with $V(G'') \subseteq V(G_2) \subseteq V(G_1)$ and $\dist_G(C_j,C_{j'} \mid Z \cap V(G_2)) = \infty$, where $Z = V(G_1) \setminus (N_G(C_j) \cap N_G(C_{j'}))$.
        The vertices of $P$ belong to $V(G'')$ but not to $N_G(C_j) \cap N_G(C_{j'})$, so they belong to $Z \cap V(G_2)$.
        Thus extending $P$ by a neighbor of $x$ in $C_j$ and a neighbor of $y$ in $C_{j'}$ yields a $C_j$-$C_{j'}$-path in $G$ through $Z \cap V(G_2)$, a contradiction.
    \end{claimproof}

    Let $(G'',L'') \in \cI'$; we show that every component of $G''$ has at most $n/2$ vertices.
    This is clear if $|V(G'')| \leq n/2$, so assume otherwise, and let $\delta = n/(2|V(G'')|) < 1$.
    By \cref{cl:br:balanced,cl:br:noncross}, the sets $S_1,\ldots,S_h$ are pairwise non-crossing $\delta$-balanced separators in $G''$, and no vertex of $G''$ belongs to all of them.
    Hence, by \cref{thm:noncrossing}, every component of $G''$ has at most $\delta|V(G'')| = n/2$ vertices.
    The finishing step outputs $\cI'$.

    \medskip
    \noindent\textbf{Conclusion.}
    Every step replaces an instance by an equivalent family of subinstances, so by the transitivity of equivalence, the output family $\cI$ is equivalent to $(G,L)$.
    The instances output in (S1) and in the finishing step satisfy condition~\ref{it:branching:small}.
    By \cref{cl:br:tree}, steps (S1)--(S3) produce $n^{\Oh(k^2h^2)}$ instances in time $n^{\Oh(k^2h^2)}$, and for each of them, the finishing step takes time $n^{\Oh(k^2h^2)}$.
    As $h = \Oh(\log n)$, the total running time is $n^{\Oh(k^2 \log^2 n)}$.
\end{proof}

\section{Conclusion}

We showed that \lcol{3} can be solved in quasipolynomial time on every (quasi)tame class of graphs.
This does not resolve the main question: whether for tame classes the running time can be improved to polynomial.

\qpoly*

A particularly interesting special case is that of circle graphs, which form a tame class; see the introduction for the status of the problem in this class.

Recall that \MWIS and its generalizations, such as the $(\mathsf{tw} \leq r, \psi)$-\MWIS problem, are solved in tame classes using the framework of potential maximal cliques~\cite{BouchitteT01,FominTV15}.
It is unclear whether potential maximal cliques can be useful for \lcol{3}; in particular, all known algorithms for \lcol{3} in tame classes, as well as in $P_t$-free graphs, follow a different route.
This leads to the following meta-question.

\begin{question}\label{q:framework}
    Is there a natural problem that generalizes both $(\mathsf{tw} \leq r, \psi)$-\MWIS and \lcol{3}, and can be solved in (quasi)polynomial time on every tame class of graphs?
\end{question}

\paragraph{Acknowledgement.} The arguments leading to the proof were found with the help of ChatGPT. The presentation and exposition are due to the author. Claude was used to polish the write-up. The author takes full responsibility for the content of the paper.

\bibliographystyle{abbrv}
\bibliography{main}
\end{document}